\journal{arXiv}
\newcommand{\eps}{\varepsilon}
\DeclareMathOperator*{\supp}{supp}
\DeclareFontFamily{OMX}{MnSymbolE}{}
\DeclareSymbolFont{MnLargeSymbols}{OMX}{MnSymbolE}{m}{n}
\DeclareFontShape{OMX}{MnSymbolE}{m}{n}{
    <-6>  MnSymbolE5
   <6-7>  MnSymbolE6
   <7-8>  MnSymbolE7
   <8-9>  MnSymbolE8
   <9-10> MnSymbolE9
  <10-12> MnSymbolE10
  <12->   MnSymbolE12
}{}
\DeclareFontShape{OMX}{MnSymbolE}{b}{n}{
    <-6>  MnSymbolE-Bold5
   <6-7>  MnSymbolE-Bold6
   <7-8>  MnSymbolE-Bold7
   <8-9>  MnSymbolE-Bold8
   <9-10> MnSymbolE-Bold9
  <10-12> MnSymbolE-Bold10
  <12->   MnSymbolE-Bold12
}{}
\let\llangle\@undefined
\let\rrangle\@undefined
\DeclareMathDelimiter{\llangle}{\mathopen}%
                     {MnLargeSymbols}{'164}{MnLargeSymbols}{'164}
\DeclareMathDelimiter{\rrangle}{\mathclose}%
                     {MnLargeSymbols}{'171}{MnLargeSymbols}{'171}
\newtheorem{theorem}{Theorem}[section]
\newtheorem{lemma}[theorem]{Lemma}
\newtheorem{corollary}[theorem]{Corollary}
\newdefinition{definition}[theorem]{Definition}
\newdefinition{notation}[theorem]{Notation}
\newdefinition{remark}[theorem]{Remark}
\newdefinition{example}[theorem]{Example}
\begin{document}
\begin{frontmatter}

\title{Bideterministic Weighted Automata\tnoteref{extversion}}
\tnotetext[extversion]{This is an extended version of the article \cite{kostolanyi2022b} published in the proceedings of the conference CAI 2022.}

\author{Peter Kostol\'anyi\fnref{funding}}
\ead{kostolanyi@fmph.uniba.sk}
\address{Department of Computer Science, Comenius University in Bratislava, \\
Mlynsk\'a dolina, 842 48 Bratislava, Slovakia}
\fntext[funding]{The author was supported by the grant VEGA 1/0601/20.}

\begin{abstract}
A finite automaton is called bideterministic if it is both deterministic and codeterministic -- that is, if it is deterministic and~its transpose is deterministic as well. 
The study of such automata in~a~weighted setting is~initiated. All trim bideterministic weighted automata over
integral domains and~over positive semirings are proved to be minimal. On~the~contrary, it is observed that this property does not hold over
commutative rings in general: non-minimal trim bideterministic weighted automata do exist over all semirings that are not zero-divisor free,
and~over many~such semirings, these automata might not even admit equivalents that are both minimal and~bideterministic. The~problem of~determining whether a given rational series
is realised by~a~bideterministic automaton is shown to be decidable over fields and~over tropical semi\-rings. An~example of~a~positive semiring over which this problem
becomes undecidable is given as well.
\end{abstract}

\begin{keyword}
Weighted automaton \sep Bideterminism \sep Minimal automaton \sep Integral domain \sep Positive semiring \sep Decidability   
\end{keyword}
\end{frontmatter}

\section{Introduction}
Unlike the classical nondeterministic finite automata without weights, weighted finite automata might not always be
determinisable. Nevertheless, partly due to their relevance for~applications such as natural language and~speech processing~\cite{mohri1997a}
and~partly due~to~their purely theoretical importance, deterministic weighted automata and~the~questions related to them -- such as decidability of~determinisability,
existence of efficient determinisation algorithms, or~characterisations of~rational series realised by deterministic weighted automata -- have received significant attention.
Deterministic weighted automata were studied over specific classes of~semirings, such as tropical semirings or~fields~\cite{allauzen2003a,bell2021a,bell2023a,kirsten2009a,kirsten2005a,kostolanyi2022a,lombardy2006a,mohri1997a,mohri2009a}, as well as over strong bimonoids~\cite{ciric2010a}, often under certain additional restrictions.

Determinism in~weighted automata has been a~rich source of~intriguing questions, many of~which remain unsettled. For instance, the~decidability status of~the~determinisability problem for weighted
automata is, despite some partial results \cite{kirsten2009a,kirsten2005a,lombardy2006a}, still open over tropical semi\-rings~\cite{lombardy2021a,lombardy2006a}. 
Decidability of~the~same problem for~automata over fields -- and~in~particular, over the~rational numbers -- remained unknown for~a~long time as well~\cite{lombardy2006a,kostolanyi2022a}, and~the~problem was only recently proved decidable by~J.~P.~Bell and~D.~Smertnig~\cite{bell2023a}. 
Complexity of~this problem stays open~\cite{bell2023a}.
It~thus makes sense to take a~look at~stronger~forms of determinism in~weighted automata, which may be amenable~to a~somewhat easier analysis.\goodbreak

Within these lines, deterministic weighted automata with certain additional requirements
on their weights have been studied. This includes, for~instance, the~research on~crisp-deterministic weighted automata by~M.~\'Ciri\'c~et~al.~\cite{ciric2010a}. Another less explored possibility is to examine the~weighted counterpart of~some 
particularly simple subclass of~deterministic finite automata without weights -- that is, to impose further restrictions not only on~weights of~deterministic weighted automata,
but on~the~concept of~determinism itself. This is a~direction that we~follow in~this article.\goodbreak            

More tangibly, this article undertakes the~study of~\emph{bideterministic} finite automata in~the~weighted setting.
A~finite automaton is bideterministic if it is both deterministic and codeterministic -- the~latter meaning that the~transpose of~the~automaton, obtained by reversing all transitions and exchanging the~roles of~initial and~terminal states,
is~deterministic as~well. This in~particular implies that a~bideterministic automaton always contains at~most one initial and~at~most one terminal state.
Bideterministic finite automata have been first touched upon from a~theoretical perspective by~\mbox{J.-\'E.}~Pin~\cite{pin1992a}, as~a~particular case of~reversible finite automata. 
The fundamental properties of bideterministic finite automata have later been explored mostly by~H.~Tamm and~E.~Ukkonen~\cite{tamm2003a,tamm2004a}, who have shown
that a~trim bideterministic automaton is~always a~minimal nondeterministic automaton for the language it recognises -- in fact, it is~the~\emph{only} minimal nondeterministic finite automaton recognising its language. 
Minimality is understood here in~the~strong sense, \emph{i.e.}, with respect
to the number of states. An~alternative proof of~this~minimality property of~trim bideterministic automata was recently presented by~R.~S.~R.~Myers, S.~Milius, and~H.~Urbat~\cite{myers2021a}.
Transition minimality of~bideterministic finite automata has also been established~\cite{tamm2008a}.
 
In addition to the~above-mentioned studies, bideterministic automata have been -- explicitly or implicitly~-- considered in~connection to the~star height problem~\cite{mcnaughton1967a,mcnaughton1969a,gruber2012a},
from the perspective of~language inference~\cite{angluin1982a}, in~the~theory of~block codes~\cite{shankar2003a}, and~within the~study of~presentations of~inverse monoids~\cite{stephen1990a,janin2015a}.

We define \emph{bideterministic weighted automata} over a~semiring by analogy to their unweighted counterparts,
and study the~conditions under which the~fundamental property of H.~Tamm and~E.~Ukkonen~\cite{tamm2003a,tamm2004a} generalises to~the~weighted setting.
Thus, given a~semiring~$S$, we ask the following questions: Are all trim bideterministic weighted automata over $S$ minimal? Does every bideterministic
automaton over~$S$ admit a~bideterministic equivalent that is at~the~same time minimal? We answer both these questions in~affirmative when~$S$
is an integral domain or a positive -- \emph{i.e.}, both zero-sum free and zero-divisor free -- semi\-ring. On~the~other hand, we show that 
the~answer to the~former question is negative whenever~$S$ is not zero-divisor free -- and~thus also whenever it is a~nontrivial commutative ring
other than an~integral domain. Moreover, we show that the answer to~the~latter question is negative for a~large class of~commutative semirings including a~multitude of~\emph{finite}
commutative \emph{rings}, and~fully understand the~answers to this question over the~rings $\mathbb{Z}/m\mathbb{Z}$ with $m \in \mathbb{N}$. 

We also consider the \emph{bideterminisability problem}, in which the task is to decide whether a~given weighted automaton over a~semiring~$S$ admits
a~bideterministic equivalent over~$S$. We show that the~problem is decidable in polynomial time when~$S$ is a~field, as it is sufficient to simply minimise the~input automaton
and~check whether the~resulting automaton is bideterministic; on~the~other hand, we give examples showing that this simple procedure might not work over integral domains in~general.
Next, we establish decidability of~the~bideterminisability problem over tropical semirings (of~nonnegative integers, integers, and~rationals).
These results suggest that the~bideterminisability problem for~weighted automata might be somewhat easier than the~determinisability problem, whose decidability status over tropical semirings remains open~\cite{lombardy2021a,lombardy2006a},
and~for~which only algorithms lacking efficiency~\cite{bell2023a} or~generality~\cite{kostolanyi2022a} are known in~the~case of~fields.
Finally, we give an~example of~a~positive semiring over which the~bideterminisability problem is undecidable.

\section{Preliminaries}
\label{sec:prelim}

We denote by $\mathbb{N}$, $\mathbb{Z}$, and $\mathbb{Q}$, respectively, the sets of all \emph{nonnegative} integers, integers, and~rational numbers.

A \emph{semiring} is a quintuple $(S,+,\cdot,0,1)$ such that $(S,+,0)$ is a commutative monoid, $(S,\cdot,1)$ is a monoid, multiplication distributes over addition both from left and from right,
and $a \cdot 0 = 0 \cdot a = 0$ holds for all $a \in S$; it is said to~be \emph{commutative} when~$\cdot$ is.
A~semiring~$S$ is~\emph{zero-sum~free}~\cite{golan1999a,hebisch1998a} if $a + b = 0$ for some $a,b \in S$
implies $a = b = 0$ and \emph{zero-divisor~free}~\cite{hebisch1998a}, or \emph{entire} \cite{golan1999a}, if $a\cdot b = 0$ for~some $a,b \in S$ implies that $a = 0$ or $b = 0$.
A semiring is \emph{positive} \cite{eilenberg1974a,kirsten2014a} if it is both zero-sum free and zero-divisor free. 
A~\emph{ring} is a~semiring $(R,+,\cdot,0,1)$ such that~$R$ forms an~abelian group with addition.
An~\emph{integral domain} is a~nontrivial zero-divisor free commutative ring.
A~\emph{field} is an~integral domain $(\mathbb{F},+,\cdot,0,1)$ such that $\mathbb{F} \setminus \{0\}$ forms an~abelian group with multiplication.\goodbreak

The \emph{ring of all polynomials} over a~commutative ring~$R$ in indeterminates $x_1,\ldots,x_n$ for $n \in \mathbb{N} \setminus \{0\}$ is denoted by~$R[x_1,\ldots,x_n]$. For $\mathbb{F}$ a field, $\mathbb{F}(x_1,\ldots,x_n)$
denotes the field of all \emph{rational fractions} over $\mathbb{F}$ in~$x_1,\ldots,x_n$, \emph{i.e.}, the fraction field of $\mathbb{F}[x_1,\ldots,x_n]$.
The subring of~a~ring~$R$ generated by~a~set~$X \subseteq R$ is the~smallest subring of~$R$ containing~$X$. Given commutative rings $R$, $R'$ such that $R \subseteq R'$ and~$X \subseteq R'$,
we denote by~$R[X]$ the~\emph{subring of~$R'$ generated by~$X$ over~$R$}, \emph{i.e.}, the~subring of~$R'$ generated by~$R \cup X$.\goodbreak  

An \emph{ideal} of~a~semiring~$S$ is a~nonempty set $I \subseteq S$ such that $a + b$, $as$, and $sa$ are in~$I$ for all $a,b \in I$ and~$s \in S$.
Given $X \subseteq S$, we denote by $(X)$ the~\emph{ideal of $S$ generated by $X$}, \emph{i.e.}, the~smallest ideal of $S$ containing $X$.
If $X = \{x_1,\ldots,x_n\}$ is a~finite set, we write $(x_1,\ldots,x_n)$ instead of~$(\{x_1,\ldots,x_n\})$. 
The~\emph{quotient ring} of~a~\emph{ring}~$R$ by~an~ideal~$I$ of~$R$ is denoted by $R/I$.

Alphabets are assumed to be finite and~nonempty. Deterministic finite automata (without weights) are~understood to~have partial transition functions in~this article,
which is a~natural assumption in~the~context of~bideterministic automata.
More precisely, we identify deterministic finite automata with nondeterministic finite automata having at most one initial state such that there is at~most one transition upon each
letter leading from each state. In~particular, a~deterministic finite automaton defined like this can~be empty. Minimisation of such automata works essentially in~the~same way as for~automata 
with complete transition functions -- the~only difference is that there is no~dead state in~the~minimised automaton.     
                                                                   
We now briefly recall some basic facts about noncommutative formal power series and weighted automata. 
More information can be found in \cite{berstel2011a,droste2009a,droste2021a,sakarovitch2009a}.

A \emph{formal power series} over a semiring $S$ and alphabet $\Sigma$ is a mapping $r\colon \Sigma^* \to S$. The value of $r$ upon $w \in \Sigma^*$ is usually denoted by $(r, w)$ and called
the \emph{coefficient} of $r$ at $w$; the coefficient of $r$ at $\eps$, the empty word, is referred to as the \emph{constant coefficient}. The series $r$ itself is written as 
\begin{displaymath}
r = \sum_{w \in \Sigma^*} (r,w)\,w.
\end{displaymath}
The set of all formal power series over $S$ and $\Sigma$ is denoted by $S\llangle\Sigma^*\rrangle$.\goodbreak

Given series $r,s \in S\llangle\Sigma^*\rrangle$, their \emph{sum} $r + s$ and \emph{product} $r \cdot s$ are defined by
$(r + s, w) = (r, w) + (s, w)$ and
\begin{displaymath}
(r\cdot s,w) = \sum_{\substack{u,v \in \Sigma^* \\ uv = w}} (r, u)(s, v)
\end{displaymath}
for all $w \in \Sigma^*$. Every $a \in S$ is identified with a series with constant coefficient $a$ and all other coefficients zero,
and every $w \in \Sigma^*$ with a series with coefficient $1$ at $w$ and zero coefficients at all $x \in \Sigma^* \setminus \{w\}$.
Thus, for~instance, $r = 2ab + 3abb$ is a series with $(r, ab) = 2$, $(r, abb) = 3$, and~$(r, x) = 0$ for every $x \in \Sigma^* \setminus \{ab, abb\}$.
One may observe that $(S\llangle\Sigma^*\rrangle,+,\cdot,0,1)$ is a semiring again.

For $I$ an index set, a family $(r_i ~|~ i \in I)$ of series from $S\llangle\Sigma^*\rrangle$ is \emph{locally finite} if $I(w) = \{i \in I ~|~ (r_i, w) \neq 0\}$
is finite for all $w \in \Sigma^*$. The \emph{sum} over the family $(r_i ~|~ i \in I)$ can then be defined by
\begin{displaymath}
\sum_{i \in I} r_i = r,
\end{displaymath}
where the coefficient $(r,w)$ at each $w \in \Sigma^*$ is given by a \emph{finite} sum
\begin{displaymath}
(r,w) = \sum_{i \in I(w)} (r_i, w).
\end{displaymath}
The \emph{support} of $r \in S\llangle\Sigma^*\rrangle$ is the language 
\begin{displaymath}
\supp(r) = \{w \in \Sigma^* ~|~ (r,w) \neq 0\}.
\end{displaymath}
The~\emph{left quotient} of $r \in S\llangle\Sigma^*\rrangle$ by a word $x \in \Sigma^*$ is a series $x^{-1} r$ such that $(x^{-1}r, w) = (r, xw)$ for~all~$w \in \Sigma^*$.\goodbreak

A \emph{weighted \emph{(}finite\emph{)} automaton} over a semiring $S$ and alphabet $\Sigma$ is a quadruple $\mathcal{A} = (Q,\sigma,\iota,\tau)$, where $Q$ is a finite set of states,
$\sigma\colon Q \times \Sigma \times Q \to S$ a~transition weighting function, $\iota\colon Q \to S$ an initial weighting function, and~$\tau\colon Q \to S$ a terminal weighting function.
We often assume without loss of generality that $Q = [n] = \{1,\ldots,n\}$
for some nonnegative integer $n$; we write $\mathcal{A} = (n,\sigma,\iota,\tau)$ instead of $\mathcal{A} = ([n],\sigma,\iota,\tau)$ in~that~case.\goodbreak

A \emph{transition} of $\mathcal{A} = (Q,\sigma,\iota,\tau)$ is a triple $(p,c,q) \in Q \times \Sigma \times Q$ such that \mbox{$\sigma(p,c,q) \neq 0$}.
A \emph{run} of $\mathcal{A}$ is a word $\gamma = q_0 c_1 q_1 c_2 q_2 \ldots q_{n-1} c_n q_n \in (Q\Sigma)^* Q$, for some $n \in \mathbb{N}$, such that $q_0,\ldots,q_n \in Q$,
$c_1,\ldots,c_n \in \Sigma$, and~$(q_{k-1},c_k,q_k)$ is a transition for $k = 1,\ldots,n$; we also say that $\gamma$ is a~run \emph{from~$q_0$~to~$q_n$} and~that $\gamma$ \emph{passes through} the~states $q_0,\ldots,q_n$.
Moreover, we~write $\lambda(\gamma) = c_1 c_2 \ldots c_n \in \Sigma^*$ for the \emph{label}~of~$\gamma$ and~$\sigma(\gamma) = \sigma(q_0,c_1,q_1) \sigma(q_1,c_2,q_2) \ldots \sigma(q_{n-1},c_n,q_n) \in S$ for the \emph{value} of $\gamma$;
we also say that $\gamma$ is \emph{a~run upon~$\lambda(\gamma)$}.
The~\emph{monomial} $\|\gamma\| \in S\llangle\Sigma^*\rrangle$ realised by the run $\gamma$ is defined by
\begin{displaymath}
\|\gamma\| = \left(\iota(q_0) \sigma(\gamma) \tau(q_n)\right) \lambda(\gamma).
\end{displaymath}
If we denote by $\mathcal{R}(\mathcal{A})$ the set of all runs of the automaton $\mathcal{A}$, then the family of~monomials $(\|\gamma\| ~|~ \gamma \in \mathcal{R}(\mathcal{A}))$ is obviously locally finite
and the \emph{behaviour} of~$\mathcal{A}$ can be defined by the infinite sum
\begin{displaymath}
\|\mathcal{A}\| = \sum_{\gamma \in \mathcal{R}(\mathcal{A})} \|\gamma\|.
\end{displaymath}
In particular, $\|\mathcal{A}\| = 0$ if $Q = \emptyset$.
A series $r \in S\llangle\Sigma^*\rrangle$ is \emph{rational} over $S$ if $r = \|\mathcal{A}\|$ for some weighted automaton $\mathcal{A}$ over $S$ and $\Sigma$.

A~state $q \in Q$ of a weighted automaton $\mathcal{A} = (Q,\sigma,\iota,\tau)$ over $S$ and $\Sigma$ is said to~be \emph{accessible}
if there is a~run in~$\mathcal{A}$ from some $p \in Q$ satisfying $\iota(p) \neq 0$ to $q$.\footnote{Note that the value of this run might be zero in case $S$ is not zero-divisor free.}
Dually, a~state $q \in Q$ is \emph{coaccessible} if there is a run in $\mathcal{A}$ from $q$ to some $p \in Q$ such that $\tau(p) \neq 0$.
The~automaton~$\mathcal{A}$ is~\emph{trim} if all its states are both accessible and coaccessible \cite{sakarovitch2009a}.

Given a~weighted automaton $\mathcal{A} = (Q,\sigma,\iota,\tau)$ and $q \in Q$, we denote by $\|\mathcal{A}\|_q$ the \emph{future} of $q$, \emph{i.e.},
the~series realised by~an~automaton $\mathcal{A}_q = (Q,\sigma,\iota_q,\tau)$ with $\iota_q(q) = 1$ and~$\iota_q(p) = 0$ for all $p \in Q \setminus \{q\}$.
Similarly, we denote by~${}_q\|\mathcal{A}\|$ the~\emph{past} of~$q$, \emph{i.e.}, the~series realised by~an~automaton ${}_q\mathcal{A} = (Q,\sigma,\iota,\tau_q)$ with $\tau_q(q) = 1$ and~$\tau_q(p) = 0$ for all $p \in Q \setminus \{q\}$.    

Let $S^{m \times n}$ be the set of all $m \times n$ matrices over $S$. A \emph{linear representation} of a~weighted automaton $\mathcal{A} = (n,\sigma,\iota,\tau)$ over $S$ and $\Sigma$ 
is given by $\mathcal{P}_{\mathcal{A}} = (n,\mathbf{i},\mu,\mathbf{f})$, where $\mathbf{i} = (\iota(1),\ldots,\iota(n))$,
$\mu\colon (\Sigma^*,\cdot) \to (S^{n \times n},\cdot)$ is a~monoid homomorphism such that for all $c \in \Sigma$ and $i,j \in [n]$, the entry of $\mu(c)$ in~the~$i$-th row and $j$-th column
is given by $\sigma(i,c,j)$, and $\mathbf{f} = (\tau(1),\ldots,\tau(n))^T$. The representation $\mathcal{P}_{\mathcal{A}}$ describes $\mathcal{A}$ unambiguously,
and $(\|\mathcal{A}\|, w) = \mathbf{i}\mu(w)\mathbf{f}$ holds for all $w \in \Sigma^*$.       

As a consequence of this connection to linear representations, methods of linear algebra can be employed in~the~study of~weighted automata \emph{over fields}.
This leads to~a~particularly well-developed theory, including a~polynomial-time minimisation algorithm,
whose basic ideas go back to~M.-P.~Sch\"utzenberger~\cite{schutzenberger1961a}
and~which has been explicitly described by A.~Cardon and~M.~Crochemore~\cite{cardon1980a}. The~reader may consult~\cite{berstel2011a,sakarovitch2009a,sakarovitch2009b} for~a~detailed exposition.

We only note here that the gist of~this minimisation algorithm lies in~an~observation that given a~weighted automaton~$\mathcal{A}$ over a~field~$\mathbb{F}$ and~alphabet~$\Sigma$ with    
$\mathcal{P}_{\mathcal{A}} = (n,\mathbf{i},\mu,\mathbf{f})$, one can find in~polynomial time a~finite language $L = \{x_1,\ldots,x_m\}$ of~words over $\Sigma$ such that $L$ is prefix-closed
and~the~vectors $\mathbf{i}\mu(x_1),\ldots,\mathbf{i}\mu(x_m)$ form a~basis of~the~vector subspace $\mathrm{Left}(\mathcal{A})$ of~$\mathbb{F}^{1 \times n}$ generated by~$\mathbf{i}\mu(x)$ for~$x \in \Sigma^*$.
Such a~language~$L$ is~called a~\emph{left basic language} of~$\mathcal{A}$. Similarly, one can find in~polynomial time
a~\emph{right basic language} of~$\mathcal{A}$ -- \emph{i.e.}, a~finite language $R = \{y_1,\ldots,y_k\}$ of words over $\Sigma$ that is suffix-closed,
and~the~vectors $\mu(y_1)\mathbf{f},\ldots,\mu(y_k)\mathbf{f}$ form a~basis of~the~vector subspace $\mathrm{Right}(\mathcal{A})$ of~$\mathbb{F}^{n \times 1}$ generated by
the~vectors $\mu(y)\mathbf{f}$ for~$y \in \Sigma^*$.

The actual minimisation algorithm then consists of two reduction steps.
The~input automaton~$\mathcal{A}$ with representation $\mathcal{P}_{\mathcal{A}} = (n,\mathbf{i},\mu,\mathbf{f})$ is
first transformed into an~equivalent automaton~$\mathcal{B}$ with $\mathcal{P}_{\mathcal{B}} = (k,\mathbf{i}',\mu',\mathbf{f}')$. Here, $k \leq n$
is the~size of~the~right basic language $R = \{y_1,\ldots,y_k\}$ of~$\mathcal{A}$ with $y_1 = \eps$, 
\begin{equation}
\label{eq:1}
\mathbf{i}' = \mathbf{i}Y, \quad \mu'(c) = Y^{-1}_{\ell} \mu(c) Y \text{ for all $c \in \Sigma$}, \quad \text{ and } \quad \mathbf{f}' = (1,0,\ldots,0)^T,
\end{equation}  
where $Y \in \mathbb{F}^{n \times k}$ is a~matrix of~full column rank with columns $\mu(y_1)\mathbf{f},\ldots,\mu(y_k)\mathbf{f}$ 
and $Y^{-1}_{\ell} \in \mathbb{F}^{k \times n}$ is its left inverse matrix. 
The~automaton~$\mathcal{B}$ is then transformed into a~\emph{minimal} equivalent automaton $\mathcal{C}$ with $\mathcal{P}_{\mathcal{C}} = (m,\mathbf{i}'',\mu'',\mathbf{f}'')$. Here,
$m \leq k$ is the size of~the~left basic language $L = \{x_1,\ldots,x_m\}$ of $\mathcal{B}$ with $x_1 = \eps$,
\begin{equation}
\label{eq:2}
\mathbf{i}'' = (1,0,\ldots,0), \quad \mu''(c) = X \mu'(c) X^{-1}_r \text{ for all $c \in \Sigma$}, \quad \text{ and } \quad \mathbf{f}'' = X \mathbf{f}', 
\end{equation}
where $X \in \mathbb{F}^{m \times k}$ is a~matrix of~full row rank with rows $\mathbf{i}'\mu'(x_1),\ldots,\mathbf{i}'\mu'(x_m)$ and~$X^{-1}_r$ is its right inverse matrix.  
As the vector space $\mathrm{Left}(\mathcal{B})$ -- which is the~row space of~$X$ -- is invariant under $\mu'(c)$ for all $c \in \Sigma$, it follows that  
\begin{equation}
\label{eq:3}
\mathbf{i}'' X = \mathbf{i}', \quad \mu''(c) X = X \mu'(c) \text{ for all $c \in \Sigma$}, \quad \text{ and } \quad \mathbf{f}'' = X \mathbf{f}', 
\end{equation}     
showing that the resulting automaton $\mathcal{C}$ is \emph{conjugate}~\cite{beal2005a,beal2006a} to~$\mathcal{B}$ by~the~matrix~$X$.
Thus $\mathbf{i}'' \mu''(x) X = \mathbf{i}' \mu'(x)$ for all $x \in \Sigma^*$, so that the vector $\mathbf{i}'' \mu''(x)$ represents the~coordinates of~$\mathbf{i}' \mu'(x)$
with respect to the~basis $(\mathbf{i}'\mu'(x_1),\ldots,\mathbf{i}'\mu'(x_m))$ of~the~space $\mathrm{Left}(\mathcal{B})$. In~particular, note that
$(\mathbf{i}''\mu''(x_1),\ldots,\mathbf{i}''\mu''(x_m))$ is the~standard basis of~$\mathbb{F}^m$.       

Finally, let us mention that any weighted automaton~$\mathcal{A}$ over~$\mathbb{F}$ and~$\Sigma$ with representation $\mathcal{P}_{\mathcal{A}} = (n,\mathbf{i},\mu,\mathbf{f})$
gives rise to a~linear mapping $\Lambda[\mathcal{A}]\colon \mathrm{Left}(\mathcal{A}) \to \mathbb{F}\llangle\Sigma^*\rrangle$, uniquely defined by
\begin{equation}
\label{eq:lambda}
\Lambda[\mathcal{A}]\colon \mathbf{i}\mu(x) \mapsto \sum_{w \in \Sigma^*} \left(\mathbf{i}\mu(x)\mu(w)\mathbf{f}\right)\,w = x^{-1} \|\mathcal{A}\|
\end{equation}
for all $x \in \Sigma^*$. This mapping is always injective when $\mathcal{A}$ is a~minimal automaton realising its behaviour~\cite{sakarovitch2009b}.\goodbreak

\section{Bideterministic Weighted Automata over a Semiring}
\label{sec:general}

In the same way as for finite automata without weights \cite{tamm2003a,tamm2004a}, we say that a~weighted automaton~$\mathcal{A}$ is~\emph{bideterministic} if both $\mathcal{A}$ and its transpose are deterministic;
in particular, $\mathcal{A}$ necessarily contains at~most one state with nonzero initial weight and at most one state with nonzero terminal weight.
This is made more precise by the following definition.\goodbreak

\begin{definition}
\label{def:bdwa}
Let $S$ be a semiring and $\Sigma$ an alphabet. A weighted automaton $\mathcal{A} = (Q,\sigma,\iota,\tau)$ over $S$ and $\Sigma$ is \emph{bideterministic}
if all of the following conditions are satisfied:
\begin{enumerate}[label=$(\roman*)$]
\item{There is at most one state $p \in Q$ such that $\iota(p) \neq 0$.
}
\item{If $\sigma(p,c,q) \neq 0$ and $\sigma(p,c,q') \neq 0$ for $p,q,q' \in Q$ and $c \in \Sigma$, then $q = q'$.
}
\item{There is at most one state $q \in Q$ such that $\tau(q) \neq 0$.
}
\item{If $\sigma(p,c,q) \neq 0$ and $\sigma(p',c,q) \neq 0$ for $p,p',q \in Q$ and $c \in \Sigma$, then $p = p'$.
}
\end{enumerate}
\end{definition}

The conditions $(i)$ and $(ii)$ assure that the~automaton~$\mathcal{A}$ is \emph{deterministic}, while the remaining two conditions assure that $\mathcal{A}$ is \emph{codeterministic}.  

It~has~been shown by H.~Tamm and E.~Ukkonen~\cite{tamm2003a,tamm2004a} that
a trim bideterministic automaton without weights is always a~minimal nondeterministic automaton for the~language it recognises. 
Every language recognised by some bideterministic automaton
thus also admits a minimal automaton that is bideterministic. Moreover, by uniqueness of minimal deterministic finite automata and existence of efficient minimisation algorithms,
it is decidable whether a~rational language is recognised by a bideterministic automaton.
 
In what follows, we ask whether these properties generalise to~bideterministic weighted automata over some semiring $S$.
That is, given a~semi\-ring~$S$, we are interested in the following three questions.\footnote{Minimality of an automaton is understood with respect to the number of states in~what follows.}      
\begin{description}
\item[Question 1.]{Is every trim bideterministic weighted automaton over~$S$ necessarily minimal?}
\item[Question 2.]{Does every bideterministic weighted automaton over~$S$ admit an~equivalent minimal weighted automaton over $S$ that is bideterministic?} 
\item[Question 3.]{Is it decidable whether a weighted automaton over~$S$ admits a~bideterministic equivalent?}
\end{description}

An~affirmative answer to Question~1 clearly implies an~affirmative answer to~Question~2 as well.
We~study the first two questions in~Section~\ref{sec:minimal} and~the~last question in~Section~\ref{sec:decid}. 

\section{The Minimality Property of Bideterministic Automata}
\label{sec:minimal}

We now study the conditions on~a~semiring~$S$ under which the~trim bideterministic weighted automata over~$S$
are always minimal, and~answer Question~1, as~well~as the~related Question~2, for three representative classes of semirings.

In~particular, we show that every trim bideterministic weighted automaton over a~\emph{field} -- or, more generally, over an~\emph{integral domain} -- is minimal.
The~same property is observed for~bideterministic weighted automata over \emph{positive semirings}, including for~instance the~\emph{tropical semirings} and~\emph{semirings of~formal languages}.
On~the~other hand, we~observe that the~first question has a~negative answer over every semiring that is not zero-divisor free -- and~thus, in particular, also over
every nontrivial \emph{commutative ring} other than an~integral domain. For~the~second question, we obtain a~negative answer for~a~large class of~commutative semirings, 
which also includes numerous finite commutative rings. Moreover, we completely characterise the~rings $\mathbb{Z}/m\mathbb{Z}$ with $m \in \mathbb{N}$, over which
this question admits an~affirmative answer. 

\subsection{Fields and Integral Domains}

The minimality property of trim bideterministic weighted automata \emph{over fields} follows by the fact that the~Cardon-Crochemore minimisation algorithm for these automata, described
in~Section~\ref{sec:prelim}, preserves both bideterminism and~the~number of~useful states of~a~bideterministic automaton, as we now observe. 

\begin{theorem}
\label{th:car-cro}
Let $\mathcal{A}$ be a bideterministic weighted automaton over a field $\mathbb{F}$. Then the Cardon-Crochemore minimisation algorithm applied to~$\mathcal{A}$ outputs a~bideterministic
weighted automaton $\mathcal{C}$. Moreover, if $\mathcal{A}$ is trim, then $\mathcal{C}$ has the same number of states as $\mathcal{A}$. 
\end{theorem}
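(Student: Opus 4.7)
The plan relies on a structural observation about bideterministic weighted automata. In such an automaton $\mathcal{A}$ with representation $(n, \mathbf{i}, \mu, \mathbf{f})$, determinism forces each transition matrix $\mu(c)$ to have at most one nonzero entry per row, and codeterminism forces at most one per column; since this property is preserved under matrix multiplication, every $\mu(w)$ with $w \in \Sigma^*$ inherits it. Combined with the fact that $\mathbf{i}$ and $\mathbf{f}$ each have at most one nonzero entry, it follows that every row vector $\mathbf{i}\mu(x)$ and every column vector $\mu(y)\mathbf{f}$ is either zero or a nonzero scalar multiple of a single standard basis vector. I would open the proof by establishing this as a lemma.

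For the first reduction step, linear independence of $\mu(y_1)\mathbf{f},\ldots,\mu(y_k)\mathbf{f}$ forces the columns of $Y$ to have the form $a_i \mathbf{u}_{q_i}$ with pairwise distinct indices $q_1,\ldots,q_k \in [n]$ and nonzero scalars $a_i$, where $\mathbf{u}_j$ denotes the $j$-th standard column vector. One can then take $Y_\ell^{-1}$ to be the matrix whose $i$-th row has value $a_i^{-1}$ in column $q_i$ and zeros elsewhere. A direct computation gives $(\mu'(c))_{ij} = a_i^{-1}(\mu(c))_{q_i, q_j} a_j$, which inherits the at-most-one-nonzero-per-row-and-per-column property because the $q_i$ are distinct. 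Combined with $\mathbf{f}' = (1,0,\ldots,0)^T$ and the observation that $\mathbf{i}' = \mathbf{i}Y$ has at most one nonzero entry, this shows that $\mathcal{B}$ is bideterministic.

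For the second reduction step I would apply the dual argument to the bideterministic automaton $\mathcal{B}$: the rows $\mathbf{i}'\mu'(x_1),\ldots,\mathbf{i}'\mu'(x_m)$ of $X$ are nonzero multiples of standard row vectors supported at pairwise distinct coordinates, so $X$ admits a right inverse $X_r^{-1}$ of the analogous partial-permutation shape. The conjugated matrices $\mu''(c) = X\mu'(c)X_r^{-1}$ then retain the at-most-one-nonzero property by the same calculation; $\mathbf{i}'' = (1,0,\ldots,0)$ by construction, while $\mathbf{f}'' = X\mathbf{f}'$ is the first column of $X$, which has at most one nonzero entry since every column of $X$ does. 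Hence $\mathcal{C}$ is bideterministic.

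For the final claim, assume $\mathcal{A}$ is trim. Coaccessibility of each state $q \in [n]$ provides a word $y$ for which $\mu(y)\mathbf{f}$ is a nonzero multiple of $\mathbf{u}_q$, so $\mathrm{Right}(\mathcal{A}) = \mathbb{F}^{n \times 1}$ and $k = n$; in this case $Y$ is a square generalised permutation matrix with $Y^{-1} = Y_\ell^{-1}$, giving $\mathbf{i}'\mu'(x) = \mathbf{i}\mu(x) Y$ for every $x \in \Sigma^*$, so accessibility of every state of $\mathcal{A}$ transfers to accessibility of every state of $\mathcal{B}$. Applying the accessibility side of the lemma to $\mathcal{B}$ then forces $m = k = n$. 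The main technical obstacle is the bookkeeping: one has to verify that the explicit partial-permutation inverses $Y_\ell^{-1}$ and $X_r^{-1}$ slot cleanly into the defining formulas for $\mu'(c)$, $\mu''(c)$, and the boundary weight vectors, and once these computations are set up, preservation of bideterminism and of state count reduces to mechanical calculation.
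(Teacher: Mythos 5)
Your proposal is correct and follows essentially the same route as the paper's proof: you exploit the monomial (at-most-one-nonzero-per-row-and-column) structure of a bideterministic representation, note that the basis vectors $\mu(y_i)\mathbf{f}$ and $\mathbf{i}'\mu'(x_i)$ are scalar multiples of distinct standard basis vectors so that $Y$ and $X$ admit partial-permutation one-sided inverses, and check that both reduction steps preserve this structure, with trimness forcing $k=n$ and then $m=n$. Your explicit entrywise computation of $\mu'(c)$ and $\mu''(c)$ and the dimension-based phrasing of the trim case are just a more detailed rendering of the paper's argument that basic-language words correspond bijectively to (co)accessible states.
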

\begin{proof}
Let $\mathcal{P} = (n,\mathbf{i},\mu,\mathbf{f})$ be a linear representation of some bideterministic weighted automaton $\mathcal{D}$.
Then there is at most one nonzero entry in each row and column of $\mu(c)$ for each $c \in \Sigma$, and at most one nonzero entry in $\mathbf{i}$ and $\mathbf{f}$.\goodbreak

Moreover, the words $x_1,\ldots,x_m$ of the left basic language of $\mathcal{D}$ correspond bijectively to accessible states of $\mathcal{D}$ and the vector
$\mathbf{i}\mu(x_i)$ contains, for $i = 1,\ldots,m$, exactly one nonzero entry at the position determined by the state corresponding to $x_i$.
Similarly, the words $y_1,\ldots,y_k$ of the right basic language of $\mathcal{D}$ correspond to coaccessible states and the vector $\mu(y_i)\mathbf{f}$ contains,
for $i = 1,\ldots,k$, exactly one nonzero entry. Thus, using these vectors to form the matrices $X$ and~$Y$ as in~Section~\ref{sec:prelim},
we see that one obtains monomial matrices after removing the~zero columns from $X$ and the zero rows from $Y$.
As~a~result, a~right inverse $X_r^{-1}$ of~$X$ can be obtained by taking the reciprocals of all nonzero entries of~$X$ and transposing the resulting matrix,
and similarly for~a~left inverse $Y_{\ell}^{-1}$ of~$Y$. 
  
The matrices $X\mu(c)X_r^{-1}$ and~$Y_{\ell}^{-1}\mu(c)Y$ for $c \in \Sigma^*$ clearly contain at~most one nonzero entry in~each row and column,
and~the~vectors $\mathbf{i} Y$ and~$X\mathbf{f}$ contain at~most one nonzero entry as well. This means that the~reduction step~(\ref{eq:1}) applied to~a~bideterministic
automaton $\mathcal{A}$ yields a~bideterministic automaton~$\mathcal{B}$, and~that the~reduction step~(\ref{eq:2}) applied to~the~bideterministic automaton~$\mathcal{B}$
yields a~bideterministic minimal automaton $\mathcal{C}$ as an output of the algorithm.

When $\mathcal{A}$ is in addition trim, then what has been said implies that the words of~the~right basic language of~$\mathcal{A}$ correspond bijectively to states of~$\mathcal{A}$,
so that the automaton $\mathcal{B}$ obtained via (\ref{eq:1}) has the~same number of~states as $\mathcal{A}$. This automaton is obviously trim as well, 
and~the~words of~the~left basic language of~$\mathcal{B}$ correspond bijectively to states of~$\mathcal{B}$. Hence, the automaton $\mathcal{C}$ obtained via~(\ref{eq:2})
also has the~same number of~states as~$\mathcal{A}$.        
\end{proof}

As every integral domain can be embedded into its field of fractions, the~property established above holds for automata over integral domains as well. 

\begin{corollary}
\label{cor:min-fields}
Every trim bideterministic weighted automaton over an integral domain is minimal.
\end{corollary}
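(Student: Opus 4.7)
The plan is to leverage Theorem~\ref{th:car-cro} together with the standard embedding of an~integral domain into~its field of~fractions. First, I~would extract from Theorem~\ref{th:car-cro} the~fact that every trim bideterministic weighted automaton \emph{over a~field}~$\mathbb{F}$ is~minimal: the~Cardon--Crochemore algorithm outputs a~minimal equivalent automaton by~construction, and~Theorem~\ref{th:car-cro} guarantees that when the~input is~trim and~bideterministic the~output has the~same number of~states, so~no~reduction occurred and~the~input was already minimal.

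To~extend this to an~integral domain~$D$, let~$\mathbb{F}$ be~the~field of~fractions of~$D$ and~let $\phi \colon D \hookrightarrow \mathbb{F}$ be the~canonical embedding. Given a~trim bideterministic weighted automaton $\mathcal{A} = (Q,\sigma,\iota,\tau)$ over $D$, I~would form the~weighted automaton $\mathcal{A}^{\phi} = (Q,\phi\circ\sigma,\phi\circ\iota,\phi\circ\tau)$ over~$\mathbb{F}$. Because~$\phi$ is injective and~sends~$0$ to~$0$, the~nonzero patterns of~the~initial, transition, and~terminal weightings are preserved, so~$\mathcal{A}^{\phi}$ satisfies all four conditions of~Definition~\ref{def:bdwa} and~is~still bideterministic; likewise, the~accessibility and~coaccessibility of~each state are witnessed by~the~same runs with the~same nonzero values, so~$\mathcal{A}^{\phi}$ remains trim. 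By~the~field case, $\mathcal{A}^{\phi}$ is~a~minimal weighted automaton over~$\mathbb{F}$.

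Finally, suppose for~contradiction that there is a~weighted automaton~$\mathcal{B}$ over~$D$ with strictly fewer states than~$\mathcal{A}$ and with $\|\mathcal{B}\| = \|\mathcal{A}\|$. Applying the~same entrywise embedding yields~$\mathcal{B}^{\phi}$, an~automaton over~$\mathbb{F}$ with the~same state set as~$\mathcal{B}$ and~satisfying $(\|\mathcal{B}^{\phi}\|,w) = \phi((\|\mathcal{B}\|,w)) = \phi((\|\mathcal{A}\|,w)) = (\|\mathcal{A}^{\phi}\|,w)$ for every $w \in \Sigma^{*}$. Thus $\mathcal{B}^{\phi}$ is~an~equivalent of~$\mathcal{A}^{\phi}$ over~$\mathbb{F}$ with fewer states, contradicting the~minimality of~$\mathcal{A}^{\phi}$.

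There is no~real obstacle here: the~argument reduces everything to~the~already established field case through a~structure-preserving embedding. The~only point that deserves a~careful word is the~observation that the~embedding preserves the~zero/nonzero pattern (hence bideterminism and~trimness) and~that behaviours commute with the~entrywise application of~$\phi$, which follows directly from the~definition of~$\|\mathcal{A}\|$ as a~sum of~monomials built from products of~weights.
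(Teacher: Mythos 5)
Your proof is correct and follows the same route as the paper: the paper's own argument is exactly the one-line observation that an integral domain embeds into its field of fractions, so the minimality established via Theorem~\ref{th:car-cro} for trim bideterministic automata over fields transfers back. You have merely spelled out the (routine) details that the embedding preserves the zero/nonzero pattern, trimness, bideterminism, and behaviours, which the paper leaves implicit.
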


\subsection{Other Commutative Rings}

We now show that the property established above for weighted automata over integral domains \emph{cannot} be generalised to automata over commutative rings.
In fact, we observe that non-minimal trim bideterministic weighted automata do exist over every \emph{semiring} that is not zero-divisor free -- and thus also over
every nontrivial commutative ring that is not an integral domain. Moreover, we exhibit a~class of~commutative semirings~$S$ such that bideterministic weighted automata
over~$S$ do not even always admit a~minimal bideterministic equivalent -- although this class includes many finite commutative rings as well, it includes
neither all nontrivial commutative rings other than integral domains, nor all such finite rings.
\goodbreak

Let us start by a simple observation showing that the answer to Question~1 of Section~\ref{sec:general} is negative whenever the~semiring in~consideration is not zero-divisor free.  

\begin{theorem}
Let $S$ be a semiring that is not zero-divisor free. Then there exists a~trim bideterministic weighted automaton~$\mathcal{A}$ over $S$ that
is not minimal.
\end{theorem}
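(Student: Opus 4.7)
The plan is to exhibit an explicit trim bideterministic automaton of two states whose behaviour is the zero series, so that it cannot be minimal (the zero series is realised by the empty automaton with no states at all).

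Since $S$ is not zero-divisor free, I fix elements $a,b \in S$ with $a \neq 0$, $b \neq 0$, and $ab = 0$. Working over a one-letter alphabet $\Sigma = \{c\}$, I define $\mathcal{A} = (\{1,2\},\sigma,\iota,\tau)$ by setting $\iota(1) = a$, $\iota(2) = 0$, $\tau(1) = 0$, $\tau(2) = b$, $\sigma(1,c,2) = 1$, and all other transition weights zero.

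The next step is to verify the three properties. Bideterminism is immediate from the definition: there is exactly one state with nonzero initial weight (namely $1$), exactly one with nonzero terminal weight (namely $2$), and the unique transition $(1,c,2)$ is both the only $c$-transition out of state $1$ and the only $c$-transition into state $2$. Trimness is also easy: state $1$ has $\iota(1) \neq 0$, so it is accessible via the empty run; the run $1 \cdot c \cdot 2$ witnesses that state $2$ is accessible and that state $1$ is coaccessible; and state $2$ is coaccessible via the empty run since $\tau(2) \neq 0$. Finally, for the behaviour I enumerate all runs: the two empty runs contribute $\iota(1)\tau(1) = 0$ and $\iota(2)\tau(2) = 0$; the unique run $1 \cdot c \cdot 2$ of length one contributes $\iota(1)\sigma(1,c,2)\tau(2) = ab = 0$; and there are no runs of length two or more because no transition leaves state $2$. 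Hence $\|\mathcal{A}\| = 0$, which is realised by the empty automaton with zero states, so $\mathcal{A}$ has strictly more states than some equivalent automaton and is therefore not minimal.

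There is no real obstacle here; the only thing to be careful about is to ensure that both states truly are trim before passing to the behaviour computation, which is why using the single transition $(1,c,2)$ with weight $1$ (rather than weight $a$ or $b$) is convenient: the transition is always present regardless of whether $a$ or $b$ happens to equal $1$, and the cancellation $ab = 0$ is concentrated at the endpoints where it is forced by the zero-divisor hypothesis.
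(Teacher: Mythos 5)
Your proposal is correct and follows essentially the same route as the paper: both exhibit a small trim bideterministic automaton whose runs pick up the zero-divisor product $st=0$ (placed at the initial and terminal weights in your version), so the behaviour is the zero series, which is realised by the empty automaton, contradicting minimality. The careful verification of trimness, bideterminism, and the run enumeration matches what the paper leaves to its figure.
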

\begin{proof}
As $S$ is not zero-divisor free, one can find $s,t \in S \setminus \{0\}$ such that $st = 0$. Given such elements, consider the~automaton~$\mathcal{A}$
in~Fig.~\ref{fig:zero-divisors}.

\begin{figure}[h]
\begin{center}
\includegraphics{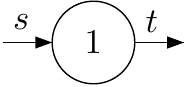}
\end{center}
\caption{\label{fig:zero-divisors}A trim bideterministic weighted automaton $\mathcal{A}$ over $S$ that is not minimal.}
\end{figure}

The automaton $\mathcal{A}$ clearly is both trim and bideterministic. However, it is not minimal: $\|\mathcal{A}\| = 0$, and~the~minimal automaton 
for this series is empty.
\end{proof}

\begin{corollary}
Let $R$ be a commutative ring. Then the following are equivalent:
\begin{enumerate}[label=$(\roman*)$]
\item{Every trim bideterministic weighted automaton~$\mathcal{A}$ over $R$ is minimal. 
}
\item{$R$ is either trivial, or an integral domain.
}
\end{enumerate} 
\end{corollary}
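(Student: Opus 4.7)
The plan is to observe that this corollary is essentially a direct combination of Corollary~\ref{cor:min-fields} and Theorem~4.3 (the immediately preceding result), with a small sanity check for the trivial ring case. Both implications reduce to already-established facts, so the sketch is short.

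For the direction $(ii) \Rightarrow (i)$, I split into two cases. If $R$ is an integral domain, then Corollary~\ref{cor:min-fields} applies verbatim. If $R$ is the trivial ring, then $0 = 1$ and every element of $R$ is $0$; consequently the conditions $\iota(p) \neq 0$, $\tau(p) \neq 0$, and $\sigma(p,c,q) \neq 0$ can never be satisfied, so no state of any weighted automaton over $R$ is accessible. Hence the only trim weighted automaton over $R$ is the empty one, which is vacuously minimal (it realises the zero series, whose minimal representation is also empty).

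For the direction $(i) \Rightarrow (ii)$, I argue by contraposition. Suppose $R$ is a nontrivial commutative ring that is not an integral domain. Since an integral domain is precisely a nontrivial zero-divisor free commutative ring, $R$ must fail to be zero-divisor free. Theorem~4.3 then immediately produces a trim bideterministic weighted automaton over $R$ that is not minimal, contradicting $(i)$.

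There is no substantive obstacle: everything needed has already been proved, and the only point that merits explicit mention is excluding the degenerate trivial-ring case, which is handled by the vacuous-accessibility observation above.
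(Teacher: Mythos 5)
Your proof is correct and matches the paper's intended argument: the corollary is stated there without proof precisely because, as you note, direction $(ii)\Rightarrow(i)$ is Corollary~\ref{cor:min-fields} plus the degenerate trivial-ring observation, and $(i)\Rightarrow(ii)$ is the contrapositive of the immediately preceding theorem on semirings that are not zero-divisor free. Your explicit handling of the trivial ring (only the empty automaton is trim, hence minimality holds vacuously) is a welcome extra detail.
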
 

Classification of commutative rings with affirmative answer to Question~2 of Section~\ref{sec:general} seems to~be less straightforward. We first show that the answer
to this question is negative over a~large class of~commutative semirings including also many finite commutative rings. 

\begin{theorem}
\label{th:commut}
Let $S$ be a commutative semiring containing elements $s,t,s',t' \in S$ such that $st \neq 0 \neq s' t'$ and~$st' = 0 = s't$.
Then there is a~trim bideterministic weighted automaton $\mathcal{A}$ over $S$ such that none of~the~minimal automata
for $\|\mathcal{A}\|$ is bideterministic.  
\end{theorem}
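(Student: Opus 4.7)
The plan is constructive: I would exhibit a specific trim bideterministic weighted automaton $\mathcal{A}$ over $S$ realising a nonzero series $r = \|\mathcal{A}\|$, and then show that (a)~$r$ admits an equivalent but strictly smaller realiser $\mathcal{B}$ over $S$, and (b)~any bideterministic weighted automaton over $S$ realising $r$ has at least as many states as $\mathcal{A}$. These two facts together imply that every minimal realiser of $r$ has at most $|\mathcal{B}|< |\mathcal{A}|$ states while no bideterministic realiser does, so no minimum is bideterministic.

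For the construction of $\mathcal{A}$, I would work over a small alphabet (the cleanest choice seems to be a four-letter one) and set up two parallel accepting paths from a unique initial state to a unique terminal state, one path carrying total transition weight $st$ and the other $s't'$. Whenever placing the two paths through the same state would otherwise clash on the same letter, the construction uses distinct letters to preserve the letter-wise partial injection property in both directions. Since $st \neq 0 \neq s't'$, two monomials of $\|\mathcal{A}\|$ are nonzero, so $\|\mathcal{A}\|\neq 0$; trimness and bideterminism are checked directly from Definition~\ref{def:bdwa}.

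To produce the smaller, non-bideterministic $\mathcal{B}$, I would use the identities $st' = s't = 0$ to merge a suitable pair of states lying on the two branches of $\mathcal{A}$. The merge introduces new cross-paths between the two branches, but each such path carries a factor of $st'$ or $s't$ in its value and therefore contributes $0$ to the behaviour, so $\|\mathcal{B}\| = \|\mathcal{A}\|$. At the same time, after the merge the common state inherits two incoming (or two outgoing) transitions on the same letter -- one from each of the two original branches -- so $\mathcal{B}$ fails one of the codeterminism (or determinism) conditions of Definition~\ref{def:bdwa} while having strictly fewer states than $\mathcal{A}$.

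The main obstacle is to establish~(b). The idea is that in any bideterministic realiser $\mathcal{A}'$ of $r$, every word in $\supp r$ corresponds to a unique accepting path whose value equals its coefficient, and the underlying unweighted transition graph of $\mathcal{A}'$ is itself a bideterministic finite automaton accepting some language containing $\supp r$. Based on the specific arrangement of letters along the two paths of $\mathcal{A}$, one then argues by case analysis -- tracking which letters can be routed through a single state without violating one of the four injectivity conditions -- that any bideterministic $\mathcal{A}'$ compatible with the two coefficient equations $(r,w_1)=st$ and $(r,w_2)=s't'$ must already contain at least $|\mathcal{A}|$ states. The delicate part is ruling out smaller bideterministic realisers that might attempt to exploit further zero-divisor cancellations: once the structural constraints on how shared letters may appear are made precise, no smaller bideterministic transition graph can simultaneously account for both nonzero coefficients. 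Because this lower bound strictly exceeds $|\mathcal{B}|$, the minimum realiser of $r$ has strictly fewer states than any bideterministic realiser and is therefore not bideterministic.
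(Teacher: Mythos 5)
Your high-level strategy matches the paper's (an explicit trim bideterministic $\mathcal{A}$, a smaller equivalent $\mathcal{B}$ obtained by merging states across the two branches so that the cross-runs pick up a factor $st'$ or $s't$ and vanish, plus a lower bound showing every bideterministic realiser is as large as $\mathcal{A}$), but the proposal has a genuine gap exactly where the theorem's difficulty lies, and the sketched construction itself is in danger of being wrong. First, the construction: if, as you suggest, you route the two branches over distinct letters whenever a same-letter clash would arise, then at the merged state the incoming (and outgoing) transitions of the two branches carry \emph{different} letters, so the merged automaton $\mathcal{B}$ is still bideterministic -- contradicting your claim that the merge destroys one of the conditions of Definition~\ref{def:bdwa}. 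Concretely, over $\Sigma=\{a,b,c,d\}$ with branches $1\xrightarrow{a:s}2\xrightarrow{b:t}4$ and $1\xrightarrow{c:s'}3\xrightarrow{d:t'}4$, merging $2$ and $3$ yields a three-state \emph{bideterministic} automaton realising $st\cdot ab+s't'\cdot cd$, so for that series the conclusion of the theorem simply fails: a small bideterministic realiser exists. To make the argument work, the two words must share letters in a carefully arranged way (the paper uses $aba$ and $bb$ over a two-letter alphabet, with the merge creating two incoming $b$-transitions), and this choice is precisely what your ``use distinct letters to avoid clashes'' heuristic forbids.

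Second, and more importantly, part (b) -- that no bideterministic automaton with fewer states than $\mathcal{A}$ realises the series -- is only promised, not proved, and it is the heart of the theorem. Your sketch via the underlying unweighted bideterministic automaton does not suffice, because over a semiring with zero divisors a run through nonzero transitions can still have zero value, so the support of the behaviour may be strictly smaller than the language of the underlying automaton; a hypothetical small realiser could try to exploit exactly such cancellations. The paper handles this with a word-by-word case analysis (forcing states $1,2,3,4$ and finally a fifth state for the $b$-transition out of the initial state), where each step of the form ``otherwise $ba$ (or $a$, $aa$, $b$, $bb$, \dots) would get a nonzero coefficient'' relies on the fact that in a \emph{commutative} semiring a subproduct of a nonzero product is nonzero, together with codeterminism to exclude re-using existing states. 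Without carrying out this analysis for a correctly chosen pair of words, the claimed lower bound, and hence the theorem, is not established.
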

\begin{proof}
Consider a trim bideterministic weighted automaton $\mathcal{A}$ over $S$ depicted in Fig. \ref{fig:1}. 
Clearly, 
\begin{displaymath}
\|\mathcal{A}\| = st \cdot aba + s't' \cdot bb.
\end{displaymath}
The automaton $\mathcal{A}$ is not minimal, as the same series is realised by a smaller automaton $\mathcal{B}$ in Fig. \ref{fig:2}:
\begin{displaymath}
\|\mathcal{B}\| = st \cdot aba + s't' \cdot bb = \|\mathcal{A}\|. 
\end{displaymath}
This gives an alternative way to obtain the negative answer to Question 1 of Section~\ref{sec:general} over $S$.

\begin{figure}[h]
\begin{center}
\includegraphics{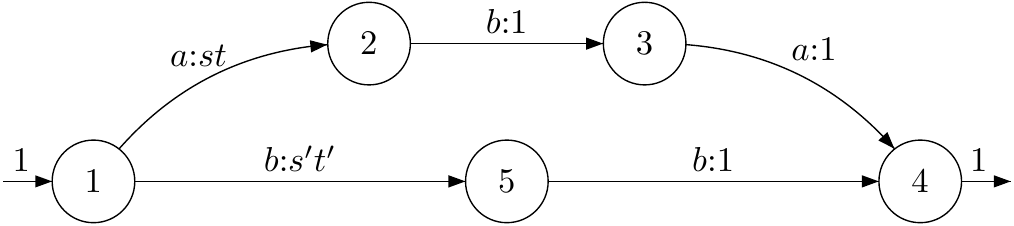}
\end{center}
\caption{\label{fig:1}The trim bideterministic weighted automaton $\mathcal{A}$ over $S$.}
\end{figure}
\begin{figure}[h]
\begin{center}
\includegraphics{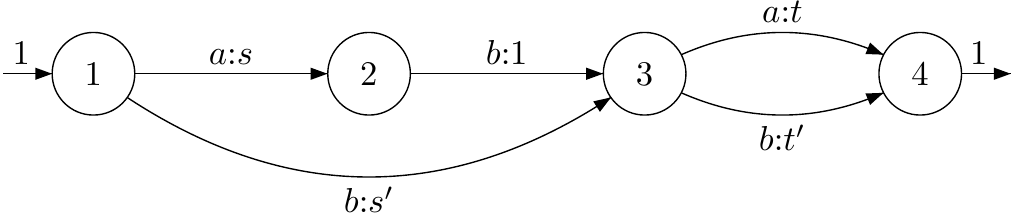}
\end{center}
\caption{\label{fig:2}The four-state weighted automaton $\mathcal{B}$ over $S$ equivalent to $\mathcal{A}$.}
\end{figure}

We show that $\|\mathcal{A}\|$ actually is not realised by any bideterministic weighted automaton over $S$ with less than five states.
This implies that $\mathcal{A}$ is a~counterexample to Question~2 of Section~\ref{sec:general}, and eventually completes the~proof.

Indeed, consider a bideterministic weighted automaton $\mathcal{C} = (Q,\sigma,\iota,\tau)$ such that $\|\mathcal{C}\| = \|\mathcal{A}\|$.
At~least one state with nonzero initial weight is needed to realise $\|\mathcal{A}\|$ by $\mathcal{C}$, as $\|\mathcal{A}\| \neq 0$.
Let us call this state $1$.

As $(\|\mathcal{A}\|,aba) = st \neq 0$, there is a~transition on~$a$ in~$\mathcal{C}$ leading from~$1$. 
This cannot be a~loop at~$1$, as~otherwise $ba$ would have a nonzero coefficient in~$\|\mathcal{C}\|$,
contradicting $\|\mathcal{C}\| = \|\mathcal{A}\|$. It thus leads to some new state, say, $2$.    
 
There has to be a transition on $b$ leading from $2$ and in the same way as~above, we observe that it can lead neither to $1$, nor to $2$, as otherwise
$a$ or $aa$ would have a nonzero coefficient in $\|\mathcal{C}\|$. It thus leads to~some new state~$3$.\goodbreak       

Similar reasoning as above gives us existence of another state $4$, to which a~transition on~$a$ leads from~$3$, and which has
a nonzero terminal weight $\tau(4)$.

Existence of one more state has to be established in order to finish the proof. To this end, observe that $(\|\mathcal{A}\|,bb) = s't' \neq 0$, so that $\mathcal{C}$ has
a transition from $1$ on~$b$, which cannot be a~loop at $1$, as otherwise $b$ would have 
a~nonzero coefficient in~$\|\mathcal{C}\|$. This transition cannot lead to $2$ either, as there already is a~transition~on~$b$ from~$2$ to~$3$, so that $bb$ would have coefficient $0$ in $\|\mathcal{C}\|$. Likewise, 
it cannot lead to~$3$, as there already is a transition
on~$b$ from~$2$ to~$3$ and $\mathcal{C}$ is supposed to be bideterministic. Finally, it also cannot lead to~$4$, as~otherwise there would have to be a~loop labelled by~$b$
at~$4$ and $b$ would have a~nonzero coefficient in~$\|\mathcal{C}\|$. The~transition on~$b$ from $1$ thus indeed leads to some new state $5$. 
\end{proof}

Several natural classes of semirings satisfying the assumptions of Theorem~\ref{th:commut} can be readily identified.

\begin{corollary}
\label{cor:commut1}
Let $S$ be a commutative semiring with elements $u,v \in S$ such that $uv = 0$ and $u^2 \neq 0 \neq v^2$.
Then there is a trim bideterministic weighted automaton $\mathcal{A}$ over $S$ such that none of the minimal automata
for $\|\mathcal{A}\|$ is bideterministic.
\end{corollary}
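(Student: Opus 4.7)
The plan is to reduce this statement directly to Theorem \ref{th:commut} by choosing the four elements $s,t,s',t'$ appropriately in~terms of~$u$ and~$v$. The natural choice is $s = t = u$ and $s' = t' = v$. Then the products $st = u^2$ and $s't' = v^2$ are nonzero by~hypothesis, while $st' = uv = 0$ by~assumption. For the fourth condition we need $s't = vu = 0$, which follows from $uv = 0$ upon invoking commutativity of~$S$.

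With these four elements in hand, all assumptions of~Theorem~\ref{th:commut} are met, so the theorem supplies a~trim bideterministic weighted automaton~$\mathcal{A}$ over~$S$ such that no minimal automaton realising $\|\mathcal{A}\|$ is bideterministic. This is exactly the conclusion asserted in the corollary.

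There is no real obstacle here: the proof is a~one-line specialisation of~Theorem~\ref{th:commut}, and the only place where a~commutative semiring hypothesis is used is in identifying $vu$ with $uv$ to obtain $s't = 0$. The final write-up need do no more than display this choice and quote the theorem.
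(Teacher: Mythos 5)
Your proof is correct and coincides with the paper's: the paper also deduces the corollary by taking $s = t = u$ and $s' = t' = v$ in Theorem~\ref{th:commut}, with commutativity giving $s't = vu = uv = 0$. Nothing further is needed.
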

\begin{proof}
Take $s = t = u$ and $s' = t' = v$ in Theorem \ref{th:commut}.
\end{proof}

Note that the class of commutative semirings from~Corollary~\ref{cor:commut1} also includes many \emph{finite} commutative \emph{rings}.
In particular, the ring $\mathbb{Z}/m\mathbb{Z}$ of integers modulo a~positive integer $m$ falls into this class whenever $m$ has at least two distinct prime factors.

\begin{corollary}
\label{cor:commut2}
Let $m$ be a positive integer with at least two distinct prime factors. 
Then there is a trim bideterministic weighted automaton $\mathcal{A}$ over $\mathbb{Z}/m\mathbb{Z}$, the finite commutative ring of integers modulo $m$, such that none of the minimal automata
for $\|\mathcal{A}\|$ is bideterministic.
\end{corollary}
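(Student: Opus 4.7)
The plan is to deduce this directly from Corollary \ref{cor:commut1} by exhibiting elements $u, v \in \mathbb{Z}/m\mathbb{Z}$ with $uv = 0$ but $u^2 \neq 0 \neq v^2$. Since $m$ has at least two distinct prime factors, I would fix two such primes $p$ and $q$ and write $m = p^{\alpha} q^{\beta} s$ with $\alpha, \beta \geq 1$ and $\gcd(pq, s) = 1$.

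Next, I would set $u = p^{\alpha} s \bmod m$ and $v = q^{\beta} \bmod m$. The product is $uv = p^{\alpha} q^{\beta} s = m \equiv 0 \pmod{m}$, which gives the zero-divisor relation. For $u^2$, I would observe that $u^2 = p^{2\alpha} s^2$ is not divisible by $m$ because $q^{\beta} \mid m$ while $\gcd(q, p^{2\alpha} s^2) = 1$, so $u^2 \not\equiv 0 \pmod{m}$. Symmetrically, $v^2 = q^{2\beta}$ is coprime to $p$, while $p \mid m$, hence $v^2 \not\equiv 0 \pmod{m}$.

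With $u$ and $v$ in hand, the commutative ring $\mathbb{Z}/m\mathbb{Z}$ satisfies the hypothesis of Corollary~\ref{cor:commut1}, and the conclusion follows immediately: there is a trim bideterministic weighted automaton $\mathcal{A}$ over $\mathbb{Z}/m\mathbb{Z}$ whose behaviour admits no bideterministic minimal equivalent.

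This is essentially a routine verification and I do not anticipate any real obstacle; the only minor point requiring care is the choice of $u$ and $v$, since the naive choice $u = p^{\alpha}$, $v = q^{\beta}s$ also works but one has to check coprimality conditions correctly. Any splitting of $m$ as a product $m = m_1 m_2$ of two coprime factors both greater than $1$ would do equally well.
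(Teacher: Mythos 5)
Your proposal is correct and follows essentially the same route as the paper: reduce to Corollary~\ref{cor:commut1} by exhibiting $u,v$ with $uv=0$ and $u^2\neq 0\neq v^2$, obtained from a splitting of $m$ into two coprime factors greater than $1$. The paper simply picks $u=p^k$ (the $p$-primary part of $m$) and $v=m/p^k$, which is one of the equivalent choices you already mention.
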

\begin{proof}
For $p$ a~prime factor of $m$ and $k \in \mathbb{N} \setminus \{0\}$ the~highest exponent such that $p^k$ divides $m$, let $q = m/p^k$. Then the ring $\mathbb{Z}/m\mathbb{Z}$ satisfies the assumptions of Corollary~\ref{cor:commut1} with $u = p^k$ and $v = q$.  
\end{proof}

Let us mention one more particular class of semirings satisfying the assumptions of~Theorem~\ref{th:commut}. This class includes, e.g., 
the commutative ring $\mathbb{Q}[x,y]/(x^2, y^2)$. 

\begin{corollary}
\label{cor:commut3}
Let $S$ be a commutative semiring with elements $u,v \in S$ such that $uv \neq 0$ and $u^2 = 0 = v^2$.
Then there is a trim bideterministic weighted automaton $\mathcal{A}$ over $S$ such that none of the minimal automata
for $\|\mathcal{A}\|$ is bideterministic.
\end{corollary}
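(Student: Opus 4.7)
The plan is to apply Theorem~\ref{th:commut} by exhibiting a suitable choice of $s,t,s',t' \in S$ from the given elements $u,v$ with $uv \neq 0$ and $u^2 = 0 = v^2$.

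The natural substitution to try is $s = u$, $t = v$, $s' = v$, $t' = u$. I would check the four conditions of Theorem~\ref{th:commut}: we have $st = uv \neq 0$ by hypothesis, and $s't' = vu = uv \neq 0$ by commutativity of $S$ together with the hypothesis. For the vanishing products, $st' = u \cdot u = u^2 = 0$ and $s't = v \cdot v = v^2 = 0$, again by hypothesis. Thus all four requirements of Theorem~\ref{th:commut} hold, and the conclusion follows directly: there exists a trim bideterministic weighted automaton $\mathcal{A}$ over $S$ (namely the one from Fig.~\ref{fig:1} with these weights) such that no minimal automaton for $\|\mathcal{A}\|$ is bideterministic.

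There is essentially no obstacle here — the corollary is a direct specialisation of Theorem~\ref{th:commut} via a symmetric choice of parameters, and commutativity is used exactly once to identify $vu$ with $uv$. One might optionally remark on the example $\mathbb{Q}[x,y]/(x^2,y^2)$ mentioned before the statement: taking $u = x$ and $v = y$ (viewed as residues) satisfies $u^2 = v^2 = 0$ while $uv = xy \neq 0$ in this ring, illustrating that Corollary~\ref{cor:commut3} applies to a genuinely different class of rings than Corollary~\ref{cor:commut1} (where $u^2$ and $v^2$ were required to be nonzero).
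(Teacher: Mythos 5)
Your proposal is correct and is exactly the paper's argument: the paper proves the corollary by taking $s = t' = u$ and $t = s' = v$ in Theorem~\ref{th:commut}, which is the same substitution you use, with the same (routine) verification of the four conditions.
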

\begin{proof}
Take $s = t' = u$ and $t = s' = v$ in Theorem \ref{th:commut}.
\end{proof}\goodbreak        

\begin{remark}
The class of commutative semirings satisfying Theorem~\ref{th:commut} can be conveniently described in~terms of~zero-divisor graphs~\cite{anderson2011a,anderson1999a,demeyer2005a,dolzan2012a} --
more precisely, in~terms of~their looped version appearing, e.g., in~\cite{anderson2011a}. Let $S$ be a~commutative semiring and~$Z(S)^*$ its set of~nontrivial divisors of~zero, \emph{i.e.},
\begin{displaymath}
Z(S)^* = Z(S) \setminus \{0\}
\end{displaymath}
for
\begin{displaymath}
Z(S) = \{a \in S \mid \exists b \in S \setminus \{0\}: ab = 0\}.
\end{displaymath}
The \emph{zero-divisor graph} of~$S$ then is a~possibly infinite undirected graph with $Z(S)^*$ 
as its set of~vertices such that vertices $a,b \in Z(S)^*$ are connected by an~edge (a~loop in~case $a = b$) if and only if $ab = 0$.

A~commutative semiring $S$ satisfies the~assumptions of~Theorem~\ref{th:commut} if and~only if its zero-divisor graph contains not necessarily distinct vertices $s,t,s',t'$ such
that there are edges connecting $s$ with~$t'$ and~$s'$ with~$t$, while there are no edges connecting $s$ with~$t$ and~$s'$ with~$t'$. This ``configuration'' is illustrated in~Fig.~\ref{fig:zd-graph}.      
\begin{figure}[h!]
\begin{center}
\includegraphics{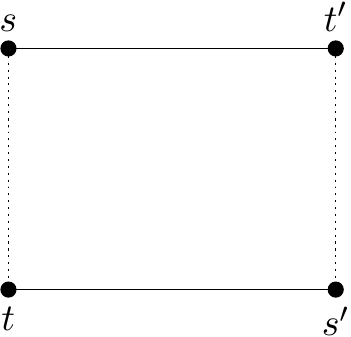}
\end{center}
\caption{\label{fig:zd-graph}A~``configuration'' that has to be contained in~the~zero-divisor graph of~every commutative semiring satisfying the~assumptions of~Theorem~\ref{th:commut}. Full lines in~the~diagram represent edges, while dotted lines should be interpreted as nonexistent edges. 
The~vertices $s,t,s',t'$ might not be distinct, so a~line between two vertices may also indicate (non)existence of~a~loop.}
\end{figure}
\end{remark}

\begin{remark}
Theorem~\ref{th:commut} significantly narrows the~class of~commutative semirings, for~which Question~2 of~Section~\ref{sec:general} may admit an~affirmative answer.
Although we show below that this actually also happens for~some finite commutative rings other than integral domains, Theorem~\ref{th:commut} implies that
some special properties are needed for~this to be the~case. More precisely, the~set of~zero divisors $Z(S)$ in~a~commutative semiring~$S$ is obviously always closed under multiplication
by~an~arbitrary element of $S$, but it may have little additive structure in~general -- see, e.g., \cite{anderson2011a} for~the~case of~rings.
It follows by Theorem~\ref{th:commut} that $Z(S)$ is an~ideal of~$S$ whenever $S$ is a~commutative semiring such that all bideterministic weighted automata over~$S$
admit an~equivalent minimal automaton that is bideterministic as well. Indeed, if $a,b \in Z(S)$ are elements of~$S$ such that $a + b \not\in Z(S)$
and~$c,d \in S \setminus \{0\}$ satisfy $ac = bd = 0$, then 
\begin{displaymath}
0 \neq (a + b) c = ac + bc = bc
\end{displaymath}
and
\begin{displaymath}
0 \neq (a + b) d = ad + bd = ad.
\end{displaymath}
Thus, by~applying Theorem~\ref{th:commut} with $s = a$, $t = d$, $s' = b$, and $t' = c$, we find out that there is a~bideterministic weighted automaton over~$S$
such that none of~the~equivalent minimal automata is bideterministic. 

The condition of~$Z(S)$ forming an~ideal is thus necessary in~order for~Question~2 of~Section~\ref{sec:general} to have an~affirmative answer over a~commutative semiring~$S$.
It is not a~sufficient condition, as can be exemplified by~the~commutative ring $R = \mathbb{Q}[x,y]/(x^2, y^2)$, which satisfies the~conditions of~Corollary~\ref{cor:commut3} although
$Z(R)$ clearly is an~ideal of~$R$. 
\end{remark}\goodbreak          

We leave open the full characterisation of commutative rings, over which bideterministic weighted automata always admit minimal bideterministic equivalents.
However, in what follows we at~least characterise the~rings $\mathbb{Z}/m\mathbb{Z}$ with this property. Recall that the~answer to~Question~2 of~Section~\ref{sec:general}
is affirmative over fields and~integral domains; this includes all the~fields $\mathbb{Z}/m\mathbb{Z}$ with $m$ prime and the integral domain $\mathbb{Z} \cong \mathbb{Z}/0\mathbb{Z}$.
The~answer is also clearly affirmative over the~trivial ring $\mathbb{Z}/1\mathbb{Z}$.
On~the~other hand, the~answer to~the~said question is negative by~Corollary~\ref{cor:commut2} when $m$ contains at~least two distinct prime factors. The~only case left unexplored thus is the one with $m = p^k$ for $p$ prime
and~$k \geq 2$ an~integer. We now show that the~answer to~Question~2 of~Section~\ref{sec:general} is affirmative for all rings $\mathbb{Z}/m\mathbb{Z}$ with $m$ taking this form.

In order to obtain this result, we need the following lemma about linear combinations in the module $(\mathbb{Z}/p^k \mathbb{Z})^n$ over the ring $\mathbb{Z}/p^k \mathbb{Z}$.     

\begin{lemma}
\label{le:lin-combinations}
Let $p$ be a prime, $k,n \in \mathbb{N} \setminus \{0\}$, and $\mathbf{v}_1,\ldots,\mathbf{v}_{n-1}$ vectors in $(\mathbb{Z}/p^k \mathbb{Z})^n$. 
Then there exists at least one index $j \in [n]$, for which there are no $a_1,\ldots,a_{n-1} \in \mathbb{Z}/p^k \mathbb{Z}$ and $b \in (\mathbb{Z}/p^k \mathbb{Z}) \setminus \{0\}$ such that
\begin{displaymath}
a_1 \mathbf{v}_1 + \ldots + a_{n-1} \mathbf{v}_{n-1} = b\mathbf{e}_j,  
\end{displaymath}
where 
\begin{displaymath}
\mathbf{e}_j = (\underbrace{0, \ldots, 0}_{j - 1}, 1, \underbrace{0, \ldots, 0}_{n - j}).
\end{displaymath}
\end{lemma}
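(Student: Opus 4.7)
The plan is to argue by contradiction and reduce the claim to a dimension count in the $p$-torsion of the submodule $M = R\mathbf{v}_1 + \cdots + R\mathbf{v}_{n-1}$ of $R^n$, where I write $R = \mathbb{Z}/p^k\mathbb{Z}$. For each $j \in [n]$, set $I_j = \{b \in R \mid b\mathbf{e}_j \in M\}$; this is an ideal of $R$, and the conclusion of the lemma is equivalent to the assertion that $I_j = (0)$ for at least one $j$. I would suppose instead that every $I_j$ is nonzero and look for a contradiction. Since the ideals of $R$ form the chain $R \supset (p) \supset (p^2) \supset \cdots \supset (p^{k-1}) \supset (0)$, each nonzero ideal of $R$ contains $(p^{k-1})$, so the supposition forces $p^{k-1}\mathbf{e}_j \in M$ for every $j \in [n]$.

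The $n$ vectors $p^{k-1}\mathbf{e}_1, \ldots, p^{k-1}\mathbf{e}_n$ are all annihilated by $p$ and are $\mathbb{F}_p$-linearly independent in $R^n$, so they span an $\mathbb{F}_p$-subspace of dimension $n$ sitting inside the $p$-torsion $M[p] = \{m \in M \mid pm = 0\}$; in particular $\dim_{\mathbb{F}_p} M[p] \geq n$. To finish, I would bound $\dim_{\mathbb{F}_p} M[p]$ from above by $n - 1$. Since $M$ is a finitely generated $R$-module, or equivalently a finite abelian $p$-group, it decomposes by elementary divisors as $M \cong \bigoplus_{i=1}^r \mathbb{Z}/p^{e_i}\mathbb{Z}$ with $1 \leq e_i \leq k$; each summand contributes exactly one copy of $\mathbb{F}_p$ to $M[p]$, so $\dim_{\mathbb{F}_p} M[p] = r$. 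If it can be shown that $r \leq n - 1$, the combined bounds $n \leq \dim_{\mathbb{F}_p} M[p] = r \leq n - 1$ yield the desired contradiction.

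The only point requiring a moment's care is the inequality $r \leq n - 1$, which I expect to be the main obstacle since everything else is routine. This bound is Nakayama's lemma for the local ring $R = \mathbb{Z}/p^k\mathbb{Z}$: the minimal number of generators of $M$ equals $\dim_{\mathbb{F}_p}(M/pM)$, and this dimension is at most $n-1$ because the surjection $R^{n-1} \twoheadrightarrow M$ induced by the choice of generators $\mathbf{v}_1, \ldots, \mathbf{v}_{n-1}$ descends to a surjection $\mathbb{F}_p^{n-1} \twoheadrightarrow M/pM$. Once this identification of $r$ with $\dim_{\mathbb{F}_p}(M/pM)$ is granted, the proof is complete.
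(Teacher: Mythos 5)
Your proof is correct, and it takes a genuinely different route from the paper's. Both arguments share the same opening reduction: since the ideals of $\mathbb{Z}/p^k\mathbb{Z}$ form a chain with $(p^{k-1})$ as the smallest nonzero ideal, the negation of the lemma forces $p^{k-1}\mathbf{e}_1,\ldots,p^{k-1}\mathbf{e}_n$ to lie in the submodule $M$ generated by $\mathbf{v}_1,\ldots,\mathbf{v}_{n-1}$. From there the paper proceeds in an elementary, hands-on way: it runs a careful variant of Gaussian elimination over $\mathbb{Z}/p^k\mathbb{Z}$ (always pivoting on an entry divisible by the lowest power of $p$), successively replacing the generators by $p^{k-1}\mathbf{e}_1,\ldots,p^{k-1}\mathbf{e}_{n-1}$ while preserving the membership of all $p^{k-1}\mathbf{e}_j$, and then derives the contradiction from $p^{k-1}\mathbf{e}_n$ having to lie in their span. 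You instead invoke the structure theorem for finite abelian $p$-groups (equivalently, finitely generated $\mathbb{Z}/p^k\mathbb{Z}$-modules) and make a two-sided dimension count: the socle $M[p]$ has $\mathbb{F}_p$-dimension equal to the number $r$ of cyclic summands, which also equals $\dim_{\mathbb{F}_p}(M/pM)$ and hence is at most $n-1$ because $M$ has $n-1$ generators, while the $n$ independent vectors $p^{k-1}\mathbf{e}_j$ force $\dim_{\mathbb{F}_p} M[p]\geq n$. All the individual claims you flag (linear independence of the $p^{k-1}\mathbf{e}_j$ in $M[p]$, one copy of $\mathbb{F}_p$ per summand, the surjection $\mathbb{F}_p^{n-1}\twoheadrightarrow M/pM$, and the Nakayama-type identification $r=\dim_{\mathbb{F}_p}(M/pM)$) are standard and hold as stated; your argument also covers the case $n=1$ uniformly, which the paper treats separately. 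What each approach buys: yours is shorter and more conceptual, making transparent that the obstruction is purely a generator count for a module over a local ring; the paper's is self-contained and uses nothing beyond explicit manipulations in $(\mathbb{Z}/p^k\mathbb{Z})^n$, at the cost of a longer inductive elimination argument.
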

\begin{proof}
Suppose for contradiction that there is a prime number $p$ and $k,n \in \mathbb{N} \setminus \{0\}$, for which the~statement of~the~lemma does not hold, \emph{i.e.},
one can find vectors $\mathbf{v}_1,\ldots,\mathbf{v}_{n-1} \in (\mathbb{Z}/p^k \mathbb{Z})^n$ such that a~nonzero multiple of $\mathbf{e}_j$ belongs to
\begin{displaymath}
\langle \mathbf{v}_1,\ldots,\mathbf{v}_{n-1}\rangle = \{a_1 \mathbf{v}_1 + \ldots + a_{n-1} \mathbf{v}_{n-1} \mid a_1,\ldots,a_{n-1} \in \mathbb{Z}/p^k \mathbb{Z}\}
\end{displaymath}
for $j = 1,\ldots,n$. Note that since $p^{k-1}$ is a multiple of every nonzero element of $\mathbb{Z}/p^k \mathbb{Z}$, we actually get
\begin{equation}
\label{lemma:eq1}
p^{k-1} \mathbf{e}_1,\ldots,p^{k-1} \mathbf{e}_{n} \in \langle \mathbf{v}_1,\ldots,\mathbf{v}_{n-1}\rangle.  
\end{equation} 
This is impossible for $n = 1$, so we may assume for the rest of the proof that $n \geq 2$. 

We now observe that with some care, a variant of Gaussian elimination can be applied to the vectors $\mathbf{v}_1,\ldots,\mathbf{v}_{n-1}$.
Take $i \in [n-1]$ and $j \in [n]$ such that the highest power of~$p$ dividing the $j$-th entry of~$\mathbf{v}_i$ is minimised -- then all entries of the vectors $\mathbf{v}_1,\ldots,\mathbf{v}_{n-1}$ 
are multiples of~the~$j$-th entry of~$\mathbf{v}_i$ over~$\mathbb{Z}/p^k \mathbb{Z}$.
This $j$-th entry of $\mathbf{v}_i$ has to be nonzero, as otherwise all of~the~vectors $\mathbf{v}_1,\ldots,\mathbf{v}_{n-1}$ would be zero,
contradicting~(\ref{lemma:eq1}). Moreover, we may assume without loss of generality that actually $j = 1$, as we can relabel the~indices when this is not the~case.    
Then, by~subtracting suitable multiples of~$\mathbf{v}_i$ from the~remaining vectors and switching the~first and~the~$i$-th vector,
we obtain vectors \smash{$\mathbf{v}^{(1)}_1,\ldots,\mathbf{v}^{(1)}_{n-1}$} such that 
\begin{equation}
\label{lemma:eq2}
p^{k-1} \mathbf{e}_1,\ldots,p^{k-1} \mathbf{e}_{n} \in \left\langle \mathbf{v}_1^{(1)},\ldots,\mathbf{v}_{n-1}^{(1)}\right\rangle
\end{equation} 
and~\smash{$\mathbf{v}^{(1)}_1$} is the~only vector with nonzero entry in~the~first column. Moreover, all entries of~these vectors are multiples of~the~first entry of~\smash{$\mathbf{v}^{(1)}_1$} over $\mathbb{Z}/p^k \mathbb{Z}$, so~that $p^{k-1}\mathbf{e}_1$ is the~only among the vectors $p^{k-1} \mathbf{e}_1,\ldots,p^{k-1} \mathbf{e}_{n}$
that can be expressed by~a~linear combination
\begin{displaymath}
a_1 \mathbf{v}^{(1)}_1 + \ldots + a_{n - 1} \mathbf{v}^{(1)}_{n - 1}  
\end{displaymath}
of~the~vectors \smash{$\mathbf{v}^{(1)}_1,\ldots,\mathbf{v}^{(1)}_{n-1}$} with $a_1,\ldots,a_{n-1} \in \mathbb{Z}/p^k \mathbb{Z}$ such that \smash{$a_1 \mathbf{v}^{(1)}_1$} is \emph{not} equal to the~zero vector.
We~may thus assume that actually \smash{$\mathbf{v}^{(1)}_1 = p^{k-1} \mathbf{e}_1$}, while (\ref{lemma:eq2}) remains valid.\goodbreak

The remaining vectors can be dealt with similarly. For $\ell \in [n-2]$, assume that \smash{$\mathbf{v}^{(\ell)}_1,\ldots,\mathbf{v}^{(\ell)}_{n-1}$} are vectors such that
\begin{displaymath}
p^{k-1} \mathbf{e}_1,\ldots,p^{k-1} \mathbf{e}_{n} \in \left\langle \mathbf{v}_1^{(\ell)},\ldots,\mathbf{v}_{n-1}^{(\ell)}\right\rangle,
\end{displaymath}
the entries of $\mathbf{v}_{\ell + 1}^{(\ell)},\ldots,\mathbf{v}_{n-1}^{(\ell)}$ in~the~first $\ell$ columns are all zero, and
\begin{displaymath}
\mathbf{v}_1^{(\ell)} = p^{k-1} \mathbf{e}_1, \qquad \ldots, \qquad \mathbf{v}_{\ell}^{(\ell)} = p^{k-1} \mathbf{e}_{\ell}. 
\end{displaymath}
One can then find $i \in \{\ell + 1, \ldots, n - 1\}$ and $j \in \{\ell + 1, \ldots, n\}$ such that all entries of~the~vectors \smash{$\mathbf{v}_{\ell + 1}^{(\ell)},\ldots,\mathbf{v}_{n-1}^{(\ell)}$} in columns $\ell + 1,\ldots,n$
are multiples of~the~nonzero $j$-th entry of~\smash{$\mathbf{v}_i^{(\ell)}$} over $\mathbb{Z}/p^k \mathbb{Z}$. Without loss of~generality, we may assume that $j = \ell + 1$.
Then, after subtracting suitable multiples of~\smash{$\mathbf{v}^{(\ell)}_i$} from the~remaining among the~vectors \smash{$\mathbf{v}_{\ell + 1}^{(\ell)},\ldots,\mathbf{v}_{n-1}^{(\ell)}$} and~switching the~$(\ell + 1)$-th and~the~$i$-th vector,
we obtain vectors \smash{$\mathbf{v}^{(\ell + 1)}_1,\ldots,\mathbf{v}^{(\ell + 1)}_{n-1}$} such that  
\begin{displaymath}
p^{k-1} \mathbf{e}_1,\ldots,p^{k-1} \mathbf{e}_{n} \in \left\langle \mathbf{v}_1^{(\ell+1)},\ldots,\mathbf{v}_{n-1}^{(\ell+1)}\right\rangle
\end{displaymath}
and the entries of $\mathbf{v}_{\ell + 2}^{(\ell + 1)},\ldots,\mathbf{v}_{n-1}^{(\ell + 1)}$ in~the~first $\ell + 1$ columns are all zero. Moreover, similarly as above, we may
assume \smash{$\mathbf{v}_{\ell + 1}^{(\ell+1)} = p^{k-1} \mathbf{e}_{\ell+1}$}, so that  
\begin{displaymath}
\mathbf{v}_1^{(\ell+1)} = p^{k-1} \mathbf{e}_1, \qquad \ldots, \qquad \mathbf{v}_{\ell + 1}^{(\ell+1)} = p^{k-1} \mathbf{e}_{\ell+1}. 
\end{displaymath}
Thus, in the end of this process, we are finally left with vectors 
\begin{displaymath}
\mathbf{v}^{(n-1)}_1 = p^{k-1} \mathbf{e}_1, \qquad \ldots, \qquad \mathbf{v}^{(n-1)}_{n-1} = p^{k-1} \mathbf{e}_{n-1},  
\end{displaymath}
while at the same time
\begin{displaymath}
p^{k-1} \mathbf{e}_{n} \in \left\langle \mathbf{v}_1^{(n-1)},\ldots,\mathbf{v}_{n-1}^{(n-1)}\right\rangle,
\end{displaymath}
a contradiction.
\end{proof}

We are now prepared to give a positive answer to Question~2 of~Section~\ref{sec:general} for automata over the rings $\mathbb{Z}/p^k\mathbb{Z}$, where $p$ is a~prime and~$k \geq 1$ is an~integer.
This is based on~an~observation that although trim bideterministic automata over these rings might not be minimal when $k \geq 2$, the~minimality property does hold for a~class of bideterministic
weighted automata that are trim in~a~slightly stronger sense made precise by the following definition. 

\begin{definition}
A weighted automaton $\mathcal{A} = (Q,\sigma,\iota,\tau)$ over a~semiring~$S$ and~alphabet~$\Sigma$ is \emph{semantically trim} if for each state $q \in Q$, there exists a~run $\gamma_q$ of~$\mathcal{A}$
passing through~$q$ such that $\|\gamma_q\| \neq 0$.
\end{definition}

Every weighted automaton $\mathcal{A}$ can be transformed to an~equivalent semantically trim automaton by~simply removing the~states~$q$ for which there is no run $\gamma_q$ of~$\mathcal{A}$
realising a~nonzero monomial passing through~$q$. Moreover, every semantically trim automaton is obviously trim, while the~converse might not hold if the~underlying semiring~$S$ is not zero-divisor free.  

\begin{theorem}
\label{th:sem-trim}
Let $p$ be a prime, $k \in \mathbb{N}$, and~$\mathcal{A}$ a~semantically trim bideterministic weighted automaton over $\mathbb{Z}/p^k\mathbb{Z}$. Then $\mathcal{A}$ is minimal. 
\end{theorem}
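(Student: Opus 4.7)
The plan is to argue by contradiction. Suppose $\mathcal{A}$ has $n$ states and admits an equivalent weighted automaton $\mathcal{B}$ over $\mathbb{Z}/p^k\mathbb{Z}$ with $n' < n$ states, hence $n' \leq n - 1$. The corner cases $k = 0$ and $n = 0$ are dispatched directly: over the trivial ring no monomial is nonzero, so the only semantically trim automaton has no states and is already minimal, as is any empty $\mathcal{A}$. For the remainder, take linear representations $\mathcal{P}_{\mathcal{A}} = (n, \mathbf{i}, \mu, \mathbf{f})$ and $\mathcal{P}_{\mathcal{B}} = (n', \mathbf{i}', \mu', \mathbf{f}')$.

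The first step is to exploit bideterminism together with semantic trimness to produce a diagonal coefficient system. Since $\mathcal{A}$ is bideterministic it has a unique state $q_0$ with $\iota(q_0) \neq 0$ and a unique state $q_f$ with $\tau(q_f) \neq 0$. For each $q \in [n]$, let $\gamma_q$ be a run through $q$ with $\|\gamma_q\| \neq 0$, and write its label as $x_q y_q$ where $x_q$ labels the $q_0 \to q$ segment and $y_q$ the $q \to q_f$ segment. Bideterminism forces the row vector $\mathbf{i}\mu(x_q)$ to have at most one nonzero entry, and the existence of a nonzero-weight path from $q_0$ to $q$ along $x_q$ pins that entry at position $q$; the symmetric argument handles $\mu(y_q)\mathbf{f}$. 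Hence
$$\mathbf{i}\mu(x_q) = a_q \mathbf{e}_q \quad \text{and} \quad \mu(y_q)\mathbf{f} = b_q \mathbf{e}_q^T,$$
where $a_q b_q = \|\gamma_q\| \neq 0$ forces both $a_q$ and $b_q$ to be nonzero in $\mathbb{Z}/p^k\mathbb{Z}$. A direct computation then yields $(\|\mathcal{A}\|, x_q y_r) = a_q b_r \delta_{qr}$ for all $q, r \in [n]$.

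Transferring this structure across the equivalence $\|\mathcal{A}\| = \|\mathcal{B}\|$ and setting $\alpha'_q = \mathbf{i}'\mu'(x_q) \in (\mathbb{Z}/p^k\mathbb{Z})^{1 \times n'}$ as well as $\beta'_r = \mu'(y_r)\mathbf{f}' \in (\mathbb{Z}/p^k\mathbb{Z})^{n' \times 1}$ gives $\alpha'_q \beta'_r = a_q b_q \delta_{qr}$. Assembling the column vectors $\beta'_1, \ldots, \beta'_n$ into a matrix $B \in (\mathbb{Z}/p^k\mathbb{Z})^{n' \times n}$, each product $\alpha'_q B$ equals $a_q b_q \mathbf{e}_q$, so every vector $a_q b_q \mathbf{e}_q$ lies in the $\mathbb{Z}/p^k\mathbb{Z}$-submodule of $(\mathbb{Z}/p^k\mathbb{Z})^n$ generated by the $n'$ rows of $B$. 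Padding with $n - 1 - n'$ zero rows yields at most $n - 1$ vectors in $(\mathbb{Z}/p^k\mathbb{Z})^n$ whose span contains a nonzero multiple of every standard basis vector $\mathbf{e}_j$, contradicting Lemma~\ref{le:lin-combinations}.

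The heart of the argument, and the only place where bideterminism is truly essential, is the reduction to the diagonal system $\alpha'_q \beta'_r = a_q b_q \delta_{qr}$ with all diagonal entries nonzero; the hypothesis that $\mathcal{A}$ is \emph{semantically} trim, rather than merely trim, is needed precisely to guarantee that the scalars $a_q b_q$ do not collapse to zero via zero divisors in $\mathbb{Z}/p^k\mathbb{Z}$. The genuine combinatorial work has already been done inside Lemma~\ref{le:lin-combinations}, which supplies exactly the rank-style obstruction required to rule out a smaller equivalent automaton over the non-domain $\mathbb{Z}/p^k\mathbb{Z}$.
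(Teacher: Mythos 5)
Your proof is correct and follows essentially the same route as the paper's: split a nonzero-weight run through each state into a prefix/suffix pair of words, use bideterminism to get the diagonal system $(\|\mathcal{A}\|, x_q y_r) = a_q b_r\delta_{qr}$ with nonzero diagonal, and transfer it to a hypothetical smaller equivalent automaton so that nonzero multiples of all $n$ standard basis vectors lie in the span of at most $n-1$ vectors, contradicting Lemma~\ref{le:lin-combinations}. (Your matrix $B$ is exactly the paper's family $\mathbf{v}_1,\ldots,\mathbf{v}_{n-1}$ built from futures and pasts of $\mathcal{B}$; the only blemish is the typo $a_q b_q\delta_{qr}$ for $a_q b_r\delta_{qr}$, which is immaterial.)
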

\begin{proof}
The theorem is trivially true for $k = 0$; we may thus assume that $k \geq 1$. Let $\mathcal{A} = (n,\sigma,\iota,\tau)$ be a~semantically trim bideterministic
weighted automaton over $\mathbb{Z}/p^k\mathbb{Z}$ and~alphabet~$\Sigma$. The automaton is minimal if $n = 0$, so we may assume that $n \geq 1$. As $\mathcal{A}$ is semantically trim, there are words $u_1,\ldots,u_n,v_1,\ldots,v_n \in \Sigma^*$ such that
\begin{equation}
\label{eq:nonzero}
({}_q\|\mathcal{A}\|,u_q) \neq 0, \qquad (\|\mathcal{A}\|_q,v_q) \neq 0, \qquad \text{ and } \qquad (\|\mathcal{A}\|, u_q v_q) \neq 0
\end{equation} 
for $q = 1,\ldots,n$. In particular, this implies that there is precisely one state $q_i \in [n]$ with $\iota(q_i) \neq 0$ and~precisely one state $q_t \in [n]$ with $\tau(q_t) \neq 0$,
and that, for $q = 1,\ldots,n$, there are runs upon $u_q$ from $q_i$ to $q$ and upon $v_q$ from $q$ to $q_t$ in $\mathcal{A}$. Moreover, it follows by bideterminism of~$\mathcal{A}$
that there cannot be a~run in~$\mathcal{A}$ upon~$v_q$ leading from~$p$ to~$q_t$ if $p$ and~$q$ are distinct states of~$\mathcal{A}$. As a~consequence,
\begin{displaymath}
(\|\mathcal{A}\|, u_p v_q) \neq 0
\end{displaymath}
holds for $p,q \in [n]$ if and only if $p = q$.\goodbreak

Now, suppose for contradiction that $\mathcal{A}$ is not minimal. By possibly adding some useless states to a~minimal equivalent automaton, we find out that
there is a~weighted automaton $\mathcal{B} = (n - 1, \sigma', \iota', \tau')$ such that $\|\mathcal{B}\| = \|\mathcal{A}\|$. For $p = 1,\ldots,n-1$, let
\begin{displaymath}
\mathbf{v}_p = \left((\|\mathcal{B}\|_p, v_1),\ldots,(\|\mathcal{B}\|_p, v_n)\right).
\end{displaymath} 
Then, for $q = 1,\ldots,n$,
\begin{displaymath}
\sum_{p = 1}^{n - 1} ({}_p\|\mathcal{B}\|, u_q) \mathbf{v}_p = \left((\|\mathcal{A}\|, u_q v_1), \ldots, (\|\mathcal{A}\|, u_q v_n)\right) = (\|\mathcal{A}\|, u_q v_q) \mathbf{e}_q,
\end{displaymath}
where
\begin{displaymath}
\mathbf{e}_q = (\underbrace{0, \ldots, 0}_{q - 1}, 1, \underbrace{0, \ldots, 0}_{n - q}).
\end{displaymath}
As $(\|\mathcal{A}\|, u_q v_q) \in (\mathbb{Z}/p^k \mathbb{Z}) \setminus \{0\}$ by (\ref{eq:nonzero}), this contradicts Lemma~\ref{le:lin-combinations}.
\end{proof}

\begin{corollary}
\label{cor:mod-rings}
Let $p$ be a prime and~$k \in \mathbb{N}$. Then every bideterministic weighted automaton over $\mathbb{Z}/p^k\mathbb{Z}$ admits an~equivalent minimal automaton that is bideterministic.
\end{corollary}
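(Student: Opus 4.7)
The plan is to reduce the statement to Theorem~\ref{th:sem-trim} by showing that every bideterministic weighted automaton over $\mathbb{Z}/p^k\mathbb{Z}$ admits an equivalent \emph{semantically trim} bideterministic automaton of the same size or smaller. Once we have such a reduction, the conclusion is immediate: the resulting semantically trim bideterministic automaton is minimal by Theorem~\ref{th:sem-trim}, hence serves as the required minimal bideterministic equivalent.

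Concretely, given a bideterministic weighted automaton $\mathcal{A} = (Q,\sigma,\iota,\tau)$ over $\mathbb{Z}/p^k\mathbb{Z}$, I would let
\begin{displaymath}
Q' = \{q \in Q \mid \text{there exists a run } \gamma \text{ of } \mathcal{A} \text{ through } q \text{ with } \|\gamma\| \neq 0\}
\end{displaymath}
and define $\mathcal{A}' = (Q',\sigma',\iota',\tau')$ to be the restriction of $\mathcal{A}$ to $Q'$ (so $\sigma'$, $\iota'$, $\tau'$ are the respective restrictions of $\sigma$, $\iota$, $\tau$ to tuples of states from $Q'$). I then need to verify three things. First, $\mathcal{A}'$ is bideterministic: each of the four conditions in Definition~\ref{def:bdwa} is expressed purely as a constraint on the nonzero entries of $\iota$, $\tau$ and $\sigma$, so it is inherited by any restriction to a subset of states. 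Second, $\mathcal{A}'$ is semantically trim: for each $q \in Q'$ there is, by definition of $Q'$, a run $\gamma$ of $\mathcal{A}$ through $q$ with $\|\gamma\| \neq 0$; such a run must pass only through states of $Q'$ (since those states also witness their own membership in $Q'$ via $\gamma$), so it is already a run of $\mathcal{A}'$ and retains the same nonzero monomial. Third, $\|\mathcal{A}'\| = \|\mathcal{A}\|$: every run of $\mathcal{A}$ passing through a state outside $Q'$ realises the zero monomial and therefore contributes nothing to the sum defining $\|\mathcal{A}\|$, so removing it leaves the behaviour unchanged.

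Having established these three points, Theorem~\ref{th:sem-trim} applied to $\mathcal{A}'$ gives that $\mathcal{A}'$ is a minimal weighted automaton realising $\|\mathcal{A}\|$, and by construction it is bideterministic, which is exactly what the corollary claims. I do not expect any genuine obstacle here: the only slightly delicate point is the observation that a run with nonzero monomial through a state $q$ automatically witnesses membership in $Q'$ for every other state it visits, which ensures that restricting to $Q'$ does not destroy any of the ``productive'' runs and thus preserves both the behaviour and semantic trimness.
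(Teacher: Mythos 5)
Your proposal is correct and follows essentially the same route as the paper: the paper's proof also passes to the semantically trim restriction (removing states through which no nonzero-monomial run passes), notes that this preserves the behaviour and bideterminism, and then applies Theorem~\ref{th:sem-trim}. Your write-up merely spells out in more detail the verifications that the paper leaves implicit.
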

\begin{proof}
Given a bideterministic weighted automaton $\mathcal{A}$ over $\mathbb{Z}/p^k\mathbb{Z}$, one can obtain a~semantically trim weighted automaton $\mathcal{B}$ equivalent to $\mathcal{A}$ by possibly
removing several states. This automaton $\mathcal{B}$ has to be bideterministic as well -- it is thus minimal by~Theorem~\ref{th:sem-trim}.  
\end{proof}

The result just established finishes the characterisation of rings $\mathbb{Z}/m\mathbb{Z}$, over which all bideterministic weighted automata admit an~equivalent minimal automaton that is bideterministic as well.
This is summarised by~the~following corollary.

\begin{corollary}
Let $m \in \mathbb{N}$. Then the following are equivalent:
\begin{enumerate}[label=$(\roman*)$]
\item{For all bideterministic weighted automata $\mathcal{A}$ over $\mathbb{Z}/m\mathbb{Z}$, there exists an~equivalent minimal automaton that is bideterministic.
}
\item{Either $m = 0$, or $m = p^k$ for $p$ a~prime and~$k \in \mathbb{N}$.
}
\end{enumerate}
\end{corollary}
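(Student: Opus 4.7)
The plan is to prove the two directions separately by patching together the results already established in the paper; the corollary is essentially a summary that collects the analysis into a complete characterization of the rings $\mathbb{Z}/m\mathbb{Z}$.

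For the direction $(ii) \Rightarrow (i)$, I would split into two subcases. If $m = 0$, then $\mathbb{Z}/0\mathbb{Z} \cong \mathbb{Z}$ is an integral domain, so given any bideterministic weighted automaton $\mathcal{A}$ over $\mathbb{Z}$, I would pass to its trim equivalent $\mathcal{A}'$ by discarding states that are not both accessible and coaccessible. This trimming operation only removes states and their incident transitions, so it preserves all four conditions of Definition~\ref{def:bdwa}; thus $\mathcal{A}'$ is still bideterministic, and by Corollary~\ref{cor:min-fields} it is minimal. If instead $m = p^k$ for $p$ prime and $k \in \mathbb{N}$, then Corollary~\ref{cor:mod-rings} directly supplies the desired minimal bideterministic equivalent (this also absorbs the trivial case $m = 1$, corresponding to $k = 0$).

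For the direction $(i) \Rightarrow (ii)$, I would argue by contraposition. Suppose $m \in \mathbb{N}$ is neither $0$ nor of the form $p^k$ with $p$ prime and $k \in \mathbb{N}$. Then $m \geq 2$ and $m$ possesses at least two distinct prime factors, so Corollary~\ref{cor:commut2} furnishes a trim bideterministic weighted automaton $\mathcal{A}$ over $\mathbb{Z}/m\mathbb{Z}$ such that none of the minimal automata realising $\|\mathcal{A}\|$ is bideterministic. This witnesses the failure of~$(i)$, completing the contrapositive.

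There is no real obstacle here: the technical weight of the argument resides in Theorem~\ref{th:commut} together with Corollary~\ref{cor:commut2} on the negative side, and in Lemma~\ref{le:lin-combinations} with Theorem~\ref{th:sem-trim} and Corollary~\ref{cor:mod-rings} on the positive side. The only point that requires a moment's care is ensuring all boundary cases are covered ($m = 0$ reducing to the integral domain $\mathbb{Z}$, and $m = 1$ falling under the prime-power case via $k = 0$); once this is noted, the corollary follows by straightforward invocation of the previously established results.
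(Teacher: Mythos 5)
Your proof is correct and follows essentially the same route as the paper: the case $m=0$ reduces to the integral domain $\mathbb{Z}$ via Corollary~\ref{cor:min-fields} (with the trimming step you make explicit, and which the paper leaves implicit), the prime-power case including $m=1$ is Corollary~\ref{cor:mod-rings}, and the converse direction follows by contraposition from Corollary~\ref{cor:commut2} since any $m\geq 2$ that is not a prime power has two distinct prime factors.
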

\begin{proof}
If $m = 0$, then $\mathbb{Z}/m\mathbb{Z} \cong \mathbb{Z}$, and~existence of~minimal bideterministic equivalents to all bideterministic automata follows
by~Corollary~\ref{cor:min-fields}. For $m = p^k$ with $p$ prime and $k \in \mathbb{N}$, the~same property follows by~Corollary~\ref{cor:mod-rings} (and~for $k = 1$ also by~Corollary~\ref{cor:min-fields}).
On the other hand, if $m \geq 2$ is not equal~to~$p^k$ for~$p$ prime and~$k \in \mathbb{N}$, then $m$ necessarily has at least two different prime factors, 
and~Corollary~\ref{cor:commut2} implies existence of a~bideterministic weighted automaton over $\mathbb{Z}/m\mathbb{Z}$ that admits no minimal bideterministic equivalent.    
\end{proof}
      
\subsection{Positive Semirings}
\label{sec:positive}

We now observe that the minimality property does hold for trim bideterministic weighted automata \emph{over positive semirings}.
Recall that a semiring is \emph{positive} if it is both zero-sum free and zero-divisor free. This class includes for instance the~\emph{tropical semirings},
\emph{semirings of formal languages}, and the \emph{Boolean semiring}. 

\begin{theorem}
Every trim bideterministic weighted automaton over a positive semiring is minimal.
\end{theorem}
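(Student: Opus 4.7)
The plan is to follow the scheme of Theorem~\ref{th:sem-trim}, but to replace the module-theoretic Lemma~\ref{le:lin-combinations} by a pigeonhole argument tailored to positive semirings. Let $\mathcal{A} = (n,\sigma,\iota,\tau)$ be a trim bideterministic weighted automaton over a positive semiring~$S$; one may assume $n \geq 1$. First I would single out the unique initial state $q_i$ and the unique terminal state $q_t$, and, using trimness, fix for each $q \in [n]$ words $u_q, v_q \in \Sigma^*$ that label a run from $q_i$ to~$q$ and a run from~$q$ to~$q_t$ respectively. Since $S$ is zero-divisor free, the values of these runs are nonzero, whence $({}_q\|\mathcal{A}\|, u_q) \neq 0$ and $(\|\mathcal{A}\|_q, v_q) \neq 0$. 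The crucial observation, coming from bideterminism combined with zero-divisor freeness, is that
\[
(\|\mathcal{A}\|, u_p v_q) \neq 0 \iff p = q :
\]
determinism allows at most one run in~$\mathcal{A}$ from $q_i$ upon $u_p v_q$, and if that run exists and ends at $q_t$ it must split at state $p$ after reading $u_p$; codeterminism then forces the $v_q$-segment starting at $p$ to coincide with the already-fixed $v_q$-run from $q$ to $q_t$, whence $p = q$.

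Suppose now for contradiction that $\mathcal{A}$ is not minimal, and pick an equivalent automaton $\mathcal{B} = (m,\sigma',\iota',\tau')$ with $m < n$. Setting $a_{qp} = ({}_p\|\mathcal{B}\|, u_q)$ and $V_{pi} = (\|\mathcal{B}\|_p, v_i)$, splitting the runs of~$\mathcal{B}$ upon $u_q v_i$ at the intermediate state yields
\[
\sum_{p=1}^{m} a_{qp} V_{pi} = (\|\mathcal{B}\|, u_q v_i) = (\|\mathcal{A}\|, u_q v_i)
\]
for every $q, i \in [n]$; by the observation above, this sum equals $c_q := (\|\mathcal{A}\|, u_q v_q) \neq 0$ when $i = q$ and vanishes otherwise.

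The contradiction is then extracted from positivity. Because $S$ is zero-sum free and $c_q \neq 0$, the sum $\sum_p a_{qp} V_{pq}$ has at least one nonzero summand, so for each $q \in [n]$ there is an index $p_q \in [m]$ with both $a_{q p_q} \neq 0$ and $V_{p_q q} \neq 0$. Since $m < n$, the pigeonhole principle yields distinct $q_1, q_2 \in [n]$ with $p_{q_1} = p_{q_2} =: p$. Then $a_{q_1 p}$ and $V_{p q_2}$ are both nonzero, and zero-divisor freeness makes their product a nonzero summand of the vanishing sum $\sum_{p'} a_{q_1 p'} V_{p' q_2} = 0$; zero-sum freeness then forces every summand of this sum to be zero, a contradiction. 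The main obstacle lies in the first step: once ``$(\|\mathcal{A}\|, u_p v_q) \neq 0 \iff p = q$'' has been secured from bideterminism and zero-divisor freeness, positivity takes care of the rest via the pigeonhole and zero-sum-free reasoning, bypassing any need for a module-theoretic lemma.
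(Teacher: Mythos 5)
Your proof is correct, but it takes a genuinely different route from the paper. The paper's argument is a two-line reduction to the unweighted case: positivity (zero-sum plus zero-divisor freeness) guarantees that forgetting the weights of any weighted automaton over $S$ yields an ordinary nondeterministic automaton recognising exactly the support of its behaviour, so a trim bideterministic $\mathcal{A}$ projects to a trim bideterministic unweighted automaton, which is a \emph{minimal NFA} for $\supp(\|\mathcal{A}\|)$ by the theorem of Tamm and Ukkonen~\cite{tamm2003a,tamm2004a}; a smaller equivalent weighted automaton would project to a smaller NFA for the same language, a contradiction. You instead give a self-contained argument modelled on Theorem~\ref{th:sem-trim}: you establish $(\|\mathcal{A}\|,u_pv_q)\neq 0\iff p=q$ exactly as there (and correctly note that trimness plus zero-divisor freeness supplies semantic trimness for free), use the standard splitting identity $(\|\mathcal{B}\|,u_qv_i)=\sum_p({}_p\|\mathcal{B}\|,u_q)(\|\mathcal{B}\|_p,v_i)$, and then replace the module-theoretic Lemma~\ref{le:lin-combinations} by a pigeonhole argument: with $m<n$ some intermediate state $p$ must serve two distinct $q_1,q_2$, and zero-divisor freeness makes $a_{q_1p}V_{pq_2}$ a nonzero summand of a sum that equals zero, contradicting zero-sum freeness. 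All steps check out (the only cosmetic point is that extracting a nonzero summand from a nonzero sum needs no zero-sum freeness; that property is only needed at the final contradiction, and also the degenerate case $m=0$ already fails one step earlier). What each approach buys: the paper's proof is shorter and modular, outsourcing the combinatorial core to the known unweighted result; yours is elementary and self-contained, does not invoke Tamm--Ukkonen at all, and in fact re-proves their minimality statement (against all NFAs) as the Boolean-semiring instance of your argument, while also making visible that the same skeleton covers both the positive-semiring case and the $\mathbb{Z}/p^k\mathbb{Z}$ case, with only the final linear-algebraic ingredient differing.
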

\begin{proof}
Let $\mathcal{A}$ be a~trim bideterministic weighted automaton over a positive semi\-ring $S$.
By~positivity of~$S$, the~language $\supp(\|\mathcal{A}\|)$ is recognised by~a~trim bideterministic finite automaton $\mathcal{A}'$ obtained from~the~automaton~$\mathcal{A}$
by~``forgetting about its weights''. This is a~minimal nondeterministic automaton for~$\supp(\|\mathcal{A}\|)$ by~the~minimality property
of~trim bideterministic automata without weights~\cite{tamm2003a,tamm2004a}. 

Now, if $\mathcal{A}$ was not minimal, there would be a~smaller weighted automaton $\mathcal{B}$ over~$S$ such that $\|\mathcal{B}\| = \|\mathcal{A}\|$.
By~``forgetting about its weights'', we would obtain a~nondeterministic finite automaton $\mathcal{B}'$ recognising $\supp(\|\mathcal{B}\|) = \supp(\|\mathcal{A}\|)$.
However, $\mathcal{B}'$ is smaller than $\mathcal{A}'$, contradicting the~minimality of~$\mathcal{A}'$.
\end{proof}

\section{Decidability of Bideterminisability}
\label{sec:decid}

Let us now consider the problem of deciding whether a given weighted automaton admits a bideterministic equivalent. 
We show that this~\emph{bideterminisability} problem is decidable both over effective fields and over tropical semirings (of nonnegative integers, integers, and~rationals).
Moreover, the~decision procedure over fields is simply based on checking whether the output of the Cardon-Crochemore minimisation algorithm is bideterministic or not -- it thus runs in~polynomial time.

By contrast, the results known so far about the~\emph{determinisability} problem are weaker: its decidability status over tropical semirings is still open~\cite{lombardy2021a,lombardy2006a},
and~its decidability over fields has only recently been established by J.~P.~Bell and~D.~Smertnig~\cite{bell2023a}. Moreover, there are no reasonable complexity bounds known 
for~the~general determinisability problem for weighted automata over fields, and a polynomial-time algorithm is known just for unary weighted automata over the rationals \cite{kostolanyi2022a}. 

\subsection{Fields and Integral Domains}

We first prove decidability of the bideterminisability problem for automata over fields. To this end, we strengthen
Theorem~\ref{th:car-cro} by showing that the Cardon-Crochemore minimisation algorithm outputs a~bideterministic automaton not only when applied to a bideterministic automaton,
but also when applied to any bideterminisable automaton. To decide bideterminisability, it thus suffices to run this algorithm and find out whether its output is bideterministic.

\begin{lemma}
\label{le:min-det}
Let $\mathcal{A}$ be a weighted automaton over a field $\mathbb{F}$ such that some of the minimal automata equivalent to $\mathcal{A}$ is deterministic.
Then the Cardon-Crochemore algorithm applied to $\mathcal{A}$ outputs a deterministic automaton.
\end{lemma}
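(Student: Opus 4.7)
The plan is to combine the specific structure of the Cardon-Crochemore output $\mathcal{C}$ with a geometric observation about left quotients of series realised by deterministic automata. Recall from~(\ref{eq:2}) and~the~surrounding discussion that if $\mathcal{P}_{\mathcal{C}} = (m, \mathbf{i}'', \mu'', \mathbf{f}'')$, then $\mathbf{i}'' = (1, 0, \ldots, 0)$ and, for~the~left basic language $\{x_1, \ldots, x_m\}$ of~$\mathcal{B}$ with $x_1 = \eps$, the~vectors $\mathbf{i}''\mu''(x_1), \ldots, \mathbf{i}''\mu''(x_m)$ form the~standard basis of~$\mathbb{F}^m$. Moreover, $\mathcal{C}$ is~minimal, so the~linear mapping $\Lambda[\mathcal{C}]$ from~(\ref{eq:lambda}) is injective. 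The~other ingredient is a~minimal deterministic automaton $\mathcal{D}$ equivalent to~$\mathcal{A}$, with representation $(m, \mathbf{j}, \nu, \mathbf{g})$; by~minimality it has exactly $m$ states, and~its determinism imposes a~very restricted shape on~its matrices.

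Concretely, since $\mathcal{D}$ is deterministic, $\mathbf{j}$ is a~scalar multiple of~some standard basis vector (there is at~most one initial state), and~each row of~every matrix $\nu(w)$ has at~most one nonzero entry. Hence $\mathbf{j}\nu(x)$ is either zero or a~scalar multiple of~some $\mathbf{e}_q$ (namely, for~$q$ the~unique state reached upon~$x$), and~therefore by~(\ref{eq:lambda}) each left quotient $x^{-1}\|\mathcal{A}\| = x^{-1}\|\mathcal{D}\|$ is either zero or a~scalar multiple of~the~future $\|\mathcal{D}\|_q$. In~particular, the~family $\{x^{-1}\|\mathcal{A}\| \mid x \in \Sigma^*\}$ lies in~the~union of~at most $m$ one-dimensional subspaces $\mathbb{F}\|\mathcal{D}\|_1, \ldots, \mathbb{F}\|\mathcal{D}\|_m$ of~$\mathbb{F}\llangle\Sigma^*\rrangle$. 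Pulling back through $\Lambda[\mathcal{C}]$, which is linear and~injective and~sends $\mathbf{i}''\mu''(x)$ to~$x^{-1}\|\mathcal{A}\|$, we~conclude that $\{\mathbf{i}''\mu''(x) \mid x \in \Sigma^*\}$ lies in~a~union of~at most $m$ lines through the~origin in~$\mathbb{F}^{1 \times m}$.

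A~pigeonhole argument then finishes the~job: among these vectors are the~$m$ linearly independent standard basis vectors $\mathbf{i}''\mu''(x_1), \ldots, \mathbf{i}''\mu''(x_m)$, and~a~union of~at most $m$ lines in~$\mathbb{F}^{1 \times m}$ can accommodate $m$ linearly independent vectors only if there are exactly $m$ such lines and~each contains exactly one of~them. The~$m$ lines are thus precisely the~coordinate axes, and~every vector $\mathbf{i}''\mu''(x)$ is a~scalar multiple of~some $\mathbf{e}_j$. Applying this to~$x = x_i c$ for each $i \in [m]$ and~$c \in \Sigma$, we~find that $\mathbf{e}_i\mu''(c) = \mathbf{i}''\mu''(x_i c)$ has at~most one nonzero entry, so~that each row of~$\mu''(c)$ has at~most one nonzero entry; combined with $\mathbf{i}'' = (1, 0, \ldots, 0)$, this is exactly the~definition of~determinism for~$\mathcal{C}$. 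The~main obstacle is spotting the~pigeonhole opportunity -- recognising that determinism of~$\mathcal{D}$ forces the~left quotients of~$\|\mathcal{A}\|$ to~span at~most $m$ ``directions'' while the~Cardon-Crochemore construction exhibits states of~$\mathcal{C}$ precisely via the~standard basis -- after which the~remaining verification is straightforward linear algebra.
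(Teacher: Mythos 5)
Your proof is correct and follows essentially the same route as the paper's: both exploit that the Cardon-Crochemore output realises the left basic words as the standard basis, that $\Lambda[\mathcal{C}]$ is injective by minimality, and that determinism of the minimal equivalent $\mathcal{D}$ forces every left quotient of $\|\mathcal{A}\|$ to be a scalar multiple of one of its $m$ futures, so that each $\mathbf{i}''\mu''(x)$ has at most one nonzero entry. Your only deviation is cosmetic: the paper concludes by contradiction from a word $x$ whose coordinate vector has two nonzero entries, whereas you finish directly (and slightly more explicitly) by checking the rows $\mathbf{e}_i\mu''(c) = \mathbf{i}''\mu''(x_i c)$.
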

\begin{proof}
Let $\mathcal{C}$ with $\mathcal{P}_{\mathcal{C}} = (m,\mathbf{i},\mu,\mathbf{f})$ be an automaton obtained as an output of the Cardon-Crochemore algorithm for input automaton $\mathcal{A}$,
and $L = \{x_1,\ldots,x_m\}$ with $x_1 = \eps$ the left basic language used in~reduction step~(\ref{eq:2}).
Then $\mathbf{i}\mu(x)$ represents, for all $x \in \Sigma^*$, the~coordinates of~the~series $x^{-1}\|\mathcal{A}\|$ with respect to~the~basis
$(x_1^{-1}\|\mathcal{A}\|,\ldots,x_m^{-1}\|\mathcal{A}\|)$ of~the~vector space $\mathcal{Q}(\|\mathcal{A}\|)$ generated by left quotients of $\|\mathcal{A}\|$ by words.\goodbreak

To see this, recall that $(\mathbf{i}\mu(x_1),\ldots,\mathbf{i}\mu(x_m))$ is the~standard basis of~$\mathbb{F}^m$
and~that the~linear mapping $\Lambda[\mathcal{C}]$ given as in~(\ref{eq:lambda}) is injective by~minimality of~$\mathcal{C}$.
As~the~image of~$\Lambda[\mathcal{C}]$ spans $\mathcal{Q}(\|\mathcal{C}\|) = \mathcal{Q}(\|\mathcal{A}\|)$, we see that
\begin{displaymath}
\left(x_1^{-1}\|\mathcal{A}\|,\ldots,x_m^{-1}\|\mathcal{A}\|\right) = \left(\Lambda[\mathcal{C}](\mathbf{i}\mu(x_1)),\ldots,\Lambda[\mathcal{C}](\mathbf{i}\mu(x_m))\right)
\end{displaymath}
is indeed a~basis of~$\mathcal{Q}(\|\mathcal{A}\|)$.
Moreover, given an~arbitrary word $x \in \Sigma^*$ with $\mathbf{i}\mu(x) = (a_1,\ldots,a_m) \in \mathbb{F}^m$, we obtain
\begin{align*}
x^{-1}\|\mathcal{A}\| & = \Lambda[\mathcal{C}](\mathbf{i}\mu(x)) = \Lambda[\mathcal{C}](a_1 \mathbf{i}\mu(x_1) + \ldots + a_m \mathbf{i}\mu(x_m)) = \\
& = a_1 \Lambda[\mathcal{C}](\mathbf{i}\mu(x_1)) + \ldots + a_m \Lambda[\mathcal{C}](\mathbf{i}\mu(x_m)) = \\
& = a_1 x_1^{-1} \|\mathcal{A}\| + \ldots + a_m x_m^{-1} \|\mathcal{A}\|, 
\end{align*} 
from which the~said property follows.

Now, assume for contradiction that $\mathcal{C}$ is not deterministic. By minimality of~$\mathcal{C}$, there is some $x \in \Sigma^*$ such that $\mathbf{i}\mu(x)$ contains
at least two nonzero entries. However, by our assumptions, there also is an $m$-state \emph{deterministic} automaton $\mathcal{D}$ such that $\|\mathcal{D}\| = \|\mathcal{A}\|$.
Linear independence of $x_1^{-1}\|\mathcal{A}\|,\ldots,x_m^{-1}\|\mathcal{A}\|$ implies that the~$m$ states of~$\mathcal{D}$ can be labelled as $q_1,\ldots,q_m$ so that
$x_i^{-1}\|\mathcal{A}\|$ is a~scalar multiple of~$\|\mathcal{D}\|_{q_i}$ for~$i = 1,\ldots,m$. By determinism of~$\mathcal{D}$, every $x^{-1}\|\mathcal{A}\|$ with $x \in \Sigma^*$
is a scalar multiple of some $\|\mathcal{D}\|_{q_i}$ with $i \in [m]$, and~hence also of~some $x_i^{-1}\|\mathcal{A}\|$. It thus follows that there is some $x \in \Sigma^*$ such
that $x^{-1}\|\mathcal{A}\|$ has two different coordinates with respect to $(x_1^{-1}\|\mathcal{A}\|,\ldots,x_m^{-1}\|\mathcal{A}\|)$: a contradiction.    
\end{proof}

\begin{theorem}
Let $\mathcal{A}$ be a weighted automaton over a field. If $\mathcal{A}$ has a~bideterministic equivalent,
then the~Cardon-Crochemore algorithm applied to $\mathcal{A}$ outputs a~bideterministic automaton.
\end{theorem}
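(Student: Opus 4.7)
The plan is to show that the minimal output $\mathcal{C}$ of the Cardon--Crochemore algorithm automatically inherits bideterminism whenever $\mathcal{A}$ admits \emph{some} bideterministic equivalent. First I would pass to a convenient witness: given any bideterministic equivalent of $\mathcal{A}$, trimming it yields a trim bideterministic automaton $\mathcal{M}$ equivalent to $\mathcal{A}$ (trimming preserves bideterminism, as it only removes states), and by Corollary~\ref{cor:min-fields} this $\mathcal{M}$ is then minimal. So it suffices to assume $\mathcal{A}$ has a \emph{minimal} bideterministic equivalent $\mathcal{M}$ with, say, $m$ states. Determinism of~$\mathcal{C}$ is then immediate from Lemma~\ref{le:min-det}, applied to the minimal deterministic equivalent~$\mathcal{M}$; and since $\mathcal{C}$ is also minimal, it has exactly $m$ states, which I label $q_1,\ldots,q_m$ according to the left basic language $L = \{x_1,\ldots,x_m\}$ (with $x_1 = \eps$) used in the second reduction step~(\ref{eq:2}). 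A short computation using $\mathbf{i}''\mu''(x_i) = \mathbf{e}_i$ shows that $\|\mathcal{C}\|_{q_i} = x_i^{-1}\|\mathcal{A}\|$ for each~$i$.

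The main step, which I expect to be the principal (but not serious) obstacle, is transferring codeterminism from~$\mathcal{M}$ to~$\mathcal{C}$ via a bijection of state sets. Because $\mathcal{M}$ is deterministic, for each $x_i$ there is a unique state $p_i$ of~$\mathcal{M}$ reached upon reading~$x_i$ from the initial state, together with a scalar $v_i$ so that $x_i^{-1}\|\mathcal{A}\| = v_i\,\|\mathcal{M}\|_{p_i}$; the scalar is nonzero because $x_i \in L$ forces $x_i^{-1}\|\mathcal{A}\| \neq 0$. Since $\mathcal{M}$ is minimal over a field, its $m$ futures $\|\mathcal{M}\|_{p}$ are linearly independent and hence form a basis of the space $\mathcal{Q}(\|\mathcal{A}\|)$ of left quotients, as do the vectors $x_i^{-1}\|\mathcal{A}\|$; thus the assignment $q_i \mapsto p_i$ cannot collapse two distinct indices to the same state of~$\mathcal{M}$, and is therefore a bijection.

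It remains to observe that this bijection is transition-faithful, so that bideterminism carries over. Suppose $\mathcal{C}$ has a transition from~$q_i$ to~$q_j$ on letter~$c$ with weight~$\alpha$; then $c^{-1}\|\mathcal{C}\|_{q_i} = \alpha\,\|\mathcal{C}\|_{q_j}$, which by the identifications $\|\mathcal{C}\|_{q_i} = v_i\,\|\mathcal{M}\|_{p_i}$ and $\|\mathcal{C}\|_{q_j} = v_j\,\|\mathcal{M}\|_{p_j}$ becomes $v_i\,c^{-1}\|\mathcal{M}\|_{p_i} = \alpha v_j\,\|\mathcal{M}\|_{p_j}$. By determinism of~$\mathcal{M}$ the left-hand side is a scalar multiple of $\|\mathcal{M}\|_{p'}$ for the unique target~$p'$ of the $c$-transition from~$p_i$ (or zero, which is excluded by $\alpha,v_i,v_j \neq 0$), and linear independence of the $\|\mathcal{M}\|_{p}$'s forces $p' = p_j$. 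Hence transitions of~$\mathcal{C}$ correspond bijectively to transitions of~$\mathcal{M}$, and applying the same argument to constant coefficients relates terminal states. Codeterminism and uniqueness of the terminal state thus transfer from $\mathcal{M}$ to~$\mathcal{C}$, and combined with the determinism already established, $\mathcal{C}$ is bideterministic.
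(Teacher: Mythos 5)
Your proposal is correct, and its first half coincides with the paper's: you trim the bideterministic equivalent, invoke Corollary~\ref{cor:min-fields} to get a \emph{minimal} bideterministic automaton $\mathcal{M}$, and then Lemma~\ref{le:min-det} to conclude that the Cardon-Crochemore output $\mathcal{C}$ is deterministic. Where you genuinely diverge is in transferring \emph{co}determinism. The paper argues purely at the level of left quotients of the common behaviour: if $\mathcal{C}$ were deterministic, minimal, but not codeterministic, there would be words $u,v$ whose quotients $u^{-1}\|\mathcal{A}\|$, $v^{-1}\|\mathcal{A}\|$ are not scalar multiples of one another yet have intersecting supports, whereas bideterminism of $\mathcal{M}$ forces such quotients to have \emph{disjoint} supports (a weighted analogue of a known property of bideterministic automata, for which the paper cites Pol\'ak); this contradiction finishes the proof in a few lines, at the cost of using that external support-disjointness property. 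You instead build an explicit state bijection $q_i \mapsto p_i$ between $\mathcal{C}$ and $\mathcal{M}$ via $\|\mathcal{C}\|_{q_i} = x_i^{-1}\|\mathcal{A}\| = v_i\,\|\mathcal{M}\|_{p_i}$ and show it is transition-faithful, so that the whole bideterministic structure (conditions on incoming transitions and on terminal states) is pulled back from $\mathcal{M}$ to $\mathcal{C}$; in effect you prove the stronger statement that $\mathcal{C}$ is isomorphic to $\mathcal{M}$ up to rescaling of states. This route is self-contained -- it needs no support-disjointness lemma -- but it leans on the standard fact that the futures of a minimal automaton over a field are linearly independent (which you should state or cite explicitly, as it underpins both the injectivity of $q_i \mapsto p_i$ and the identification of the target state $p' = p_j$), and in the step excluding $c^{-1}\|\mathcal{M}\|_{p_i} = 0$ you should note that $\|\mathcal{M}\|_{p_j} \neq 0$ (immediate from $\|\mathcal{C}\|_{q_j} = x_j^{-1}\|\mathcal{A}\| \neq 0$ and $v_j \neq 0$), not only that $\alpha, v_i, v_j \neq 0$. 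These are minor polish items, not gaps; both arguments are sound, with the paper's being shorter and yours yielding more structural information.
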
\goodbreak
\begin{proof}
Let $\mathcal{A}$ admit a bideterministic equivalent $\mathcal{B}$, and assume that the~automaton~$\mathcal{B}$ is trim. Then $\mathcal{B}$ is minimal by~Corollary~\ref{cor:min-fields},
so Lemma~\ref{le:min-det} implies that the algorithm applied to $\mathcal{A}$ yields a deterministic automaton $\mathcal{D}$.
If $\mathcal{D}$ was not bideterministic, then there would be $u,v \in \Sigma^*$ such that $u^{-1}\|\mathcal{D}\|$ is not a~scalar multiple of~$v^{-1}\|\mathcal{D}\|$ and~$\supp(u^{-1}\|\mathcal{D}\|) \cap \supp(v^{-1}\|\mathcal{D}\|) \neq \emptyset$.
On~the~other hand, bideterminism of~$\mathcal{B}$ implies\footnote{This is a straightforward extension of a well-known property of bideterministic finite automata without weights -- see, e.g., L.~Pol\'ak~\cite[Section 5]{polak2005a}.} $\supp(u^{-1}\|\mathcal{B}\|) \cap \supp(v^{-1}\|\mathcal{B}\|) = \emptyset$
when $u^{-1}\|\mathcal{B}\|$ is not a~scalar multiple of~$v^{-1}\|\mathcal{B}\|$.
This contradicts the~assumption that $\|\mathcal{B}\| = \|\mathcal{D}\| = \|\mathcal{A}\|$. 
\end{proof}

\begin{corollary}
Bideterminisability of weighted automata over effective fields is decidable in polynomial time.
\end{corollary}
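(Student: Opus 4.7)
The plan is to exhibit an explicit polynomial-time decision procedure and then argue its correctness using the previous theorem. Given a weighted automaton~$\mathcal{A}$ over an effective field~$\mathbb{F}$, the algorithm simply runs the Cardon-Crochemore minimisation algorithm on~$\mathcal{A}$ to produce a minimal equivalent automaton~$\mathcal{C}$, and then checks whether~$\mathcal{C}$ satisfies the four structural conditions of~Definition~\ref{def:bdwa}. The answer is ``yes'' precisely if this check succeeds.

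For correctness in one direction, if the output~$\mathcal{C}$ is bideterministic, then~$\mathcal{C}$ itself is a bideterministic equivalent of~$\mathcal{A}$, so~$\mathcal{A}$ is bideterminisable. Conversely, if~$\mathcal{A}$ admits a bideterministic equivalent, then the preceding theorem guarantees that the Cardon-Crochemore algorithm applied to~$\mathcal{A}$ outputs a bideterministic automaton, so the check succeeds. Thus the procedure decides bideterminisability correctly.

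For the complexity bound, the Cardon-Crochemore algorithm is well known to run in polynomial time over any effective field (this was already invoked in Section~\ref{sec:prelim} when computing the left and right basic languages and performing the reduction steps~(\ref{eq:1}) and~(\ref{eq:2})). The bideterminism check is purely structural: it suffices to traverse the linear representation $\mathcal{P}_{\mathcal{C}} = (m,\mathbf{i}'',\mu'',\mathbf{f}'')$ and verify that each of $\mathbf{i}''$ and $\mathbf{f}''$ has at most one nonzero entry, and that each matrix $\mu''(c)$ for $c \in \Sigma$ has at most one nonzero entry per row and per column. This check runs in time polynomial in the size of~$\mathcal{C}$, hence polynomial in the size of~$\mathcal{A}$.

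There is no real obstacle here: the entire content of the corollary is bundled in the preceding theorem (which ensures completeness of the procedure) together with the standard polynomial-time complexity of Cardon-Crochemore minimisation over effective fields. The proof should be short and consist essentially of assembling these two ingredients.
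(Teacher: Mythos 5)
Your proposal is correct and follows exactly the route the paper intends: the corollary is immediate from the preceding theorem (completeness of the check) together with the trivial converse and the polynomial running time of the Cardon-Crochemore minimisation, with bideterminism of the output verified structurally on its linear representation. Nothing is missing.
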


We leave the decidability status of the bideterminisability problem open for weighted automata over integral domains. 
However, the following two examples indicate that the~simple decision procedure for~automata over fields is no longer sufficient when weights are taken from an~integral domain.

In particular, Example \ref{ex:int-domains1} shows that a~weighted automaton over an~integral domain~$R$ can be bideterminisable over the fraction field of~$R$ even if it does not admit
a~bideterministic equivalent over $R$ itself. The~Cardon-Crochemore minimisation algorithm thus may produce 
a~bideterministic automaton $\mathcal{B}$ over the~field of~fractions of~$R$ even if the~input automaton $\mathcal{A}$
is not bideterminisable over $R$, so that simply checking whether $\mathcal{B}$ is bideterministic is insufficient to decide bideterminisability of $\mathcal{A}$.

On~the~other hand, in Example \ref{ex:int-domains2} we observe that although the~Cardon-Crochemore algorithm applied to~a~bideterminisable automaton over an~integral domain~$R$
always outputs a~bideterministic automaton over the~fraction field of $R$, it may not output a~bideterministic automaton over~$R$ itself. This means that running the~Cardon-Crochemore
algorithm and~checking whether the~result is a~bideterministic automaton \emph{over $R$} is not appropriate either.      

\begin{example}
\label{ex:int-domains1}
Consider the subring of the integral domain $\mathbb{Q}[x,y]$ generated over $\mathbb{Q}$ by the set of all monomials of degree at least two, \emph{i.e.},
the integral domain $\mathbb{Q}[X] \subseteq \mathbb{Q}[x,y]$ for~$X = \{x^m y^n \mid m + n \geq 2\}$.
As~one can obtain both $x$~and~$y$ as~quotients of~monomials from~$X$ -- e.g., $x = x^3/x^2$ and $y = y^3/y^2$ -- the~fraction field of~$\mathbb{Q}[X]$ equals~$\mathbb{Q}(x,y)$.
 
Let $\mathcal{A}$ be a weighted automaton over the integral domain $\mathbb{Q}[X]$ and alphabet $\Sigma = \{a\}$ given in Fig. \ref{fig:id1}.
Clearly,
\begin{displaymath}
\|\mathcal{A}\| = \sum_{t \in \mathbb{N}} (x + y)^{t + 2}\,a^t.
\end{displaymath}
 
\begin{figure}[h!]
\begin{center}
\includegraphics{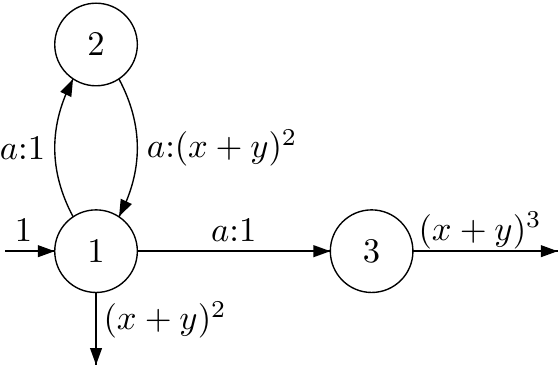}
\end{center}
\caption{\label{fig:id1}The weighted automaton $\mathcal{A}$ over $\mathbb{Q}[X]$ and $\Sigma = \{a\}$.}
\end{figure}

It is easy to see that $\mathcal{A}$ is equivalent to the bideterministic weighted automaton $\mathcal{B}$ over $\mathbb{Q}(x,y)$ in Fig.~\ref{fig:id2}. Hence,
$\mathcal{A}$ is bideterminisable over $\mathbb{Q}(x,y)$, the fraction field of $\mathbb{Q}[X]$. 

\begin{figure}[h!]
\begin{center}
\includegraphics{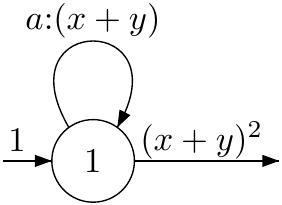}
\end{center}
\caption{\label{fig:id2}The bideterministic weighted automaton $\mathcal{B}$ over $\mathbb{Q}(x,y)$ equivalent to $\mathcal{A}$.}
\end{figure}

In spite of this, we now show that the~automaton $\mathcal{A}$ is not bideterminisable over $\mathbb{Q}[X]$. Assume for~contradiction that
$\|\mathcal{A}\| = \|\mathcal{C}\|$ for some bideterministic weighted automaton $\mathcal{C}$ over $\mathbb{Q}[X]$. 
As $\eps \in \supp(\|\mathcal{A}\|)$, the~automaton $\mathcal{C} = (Q,\sigma,\iota,\tau)$ contains a~state~$q \in Q$ such that both $\iota(q) \neq 0$ and~$\tau(q) \neq 0$,
while $\iota(p) = \tau(p) = 0$ for~all~$p \in Q \setminus \{q\}$. Moreover,
\begin{equation}
\label{eq:id-eq1}
\iota(q) \tau(q) = \left(\|\mathcal{A}\|, \eps\right) = (x + y)^2.
\end{equation} 
Next, as $a \in \supp(\|\mathcal{A}\|)$ as well, there has~to~be a~transition upon~$a$ leading from~$q$ to~a~state with nonzero terminal weight, \emph{i.e.},
an~$a$-labelled self-loop at~$q$. Moreover, by~commutativity of~$\mathbb{Q}[x,y]$,
\begin{displaymath}
\iota(q) \tau(q) \sigma(q,a,q) = \iota(q) \sigma(q,a,q) \tau(q) = \left(\|\mathcal{A}\|, a\right) = (x + y)^3, 
\end{displaymath} 
which together with (\ref{eq:id-eq1}) implies
\begin{equation}
\label{eq:id-eq2}
(x + y)^2 \sigma(q, a, q) = (x + y)^3.
\end{equation} 
However, the only possible $\sigma(q, a, q) \in \mathbb{Q}[x,y]$ satisfying (\ref{eq:id-eq2}) is $\sigma(q, a, q) = x + y$, and~this is not an~element of~$\mathbb{Q}[X]$.
This contradicts our assumption that $\mathcal{C}$ is a~bideterministic weighted automaton over~$\mathbb{Q}[X]$. 

We thus see that the Cardon-Crochemore minimisation algorithm applied to $\mathcal{A}$ outputs a~bideterministic weighted automaton over the~fraction field~$\mathbb{Q}(x,y)$
although the~automaton~$\mathcal{A}$ is not bideterminisable over the~original integral domain~$\mathbb{Q}[X]$.  
\end{example}

\begin{example}
\label{ex:int-domains2}
Consider the weighted automaton $\mathcal{A}$ over the integral domain $\mathbb{Z}$ and alphabet $\Sigma = \{a,b\}$ depicted in Fig. \ref{fig:id3}.
This automaton is clearly bideterministic, so it is trivially bideterminisable.

\begin{figure}[h!]
\begin{center}
\includegraphics{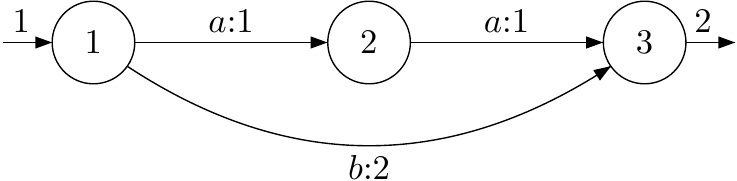}
\end{center}
\caption{\label{fig:id3}The bideterministic weighted automaton $\mathcal{A}$ over $\mathbb{Z}$.}
\end{figure}

Now, the output of the Cardon-Crochemore algorithm depends on how precisely the~left and~right basic languages are computed.
Although the~exact procedure may vary, a~common outcome can be that the~right basic language~$R$ used in~the~reduction step~(\ref{eq:1})
and~the~left basic language~$L$ used in~the~reduction step~(\ref{eq:2}) are both equal to~$\{\eps,a,b\}$. 
In~that case, the~Cardon-Crochemore algorithm produces the~automaton~$\mathcal{B}$ in~Fig.~\ref{fig:id4}.    
  
\begin{figure}[h!]
\begin{center}
\includegraphics{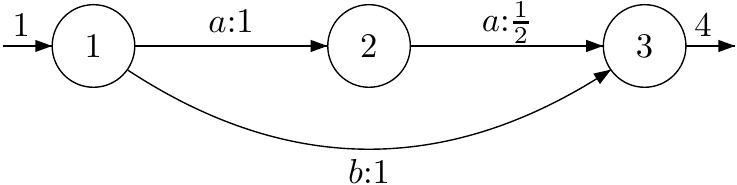}
\end{center}
\caption{\label{fig:id4}The output automaton $\mathcal{B}$ of the Cardon-Crochemore algorithm applied to $\mathcal{A}$.}
\end{figure}  

Although $\mathcal{B}$ is a bideterministic weighted automaton over $\mathbb{Q}$ -- which is the~fraction field of~the~integral domain~$\mathbb{Z}$ -- it contains a weight $1/2$,
so it is not a bideterministic weighted automaton over $\mathbb{Z}$. Of~course, one could construct similar examples for other algorithms of computing the basic languages as well.

We may thus conclude that the Cardon-Crochemore algorithm applied to~a~bideterminisable weighted automaton over an integral domain might not always produce a~bideterministic weighted
automaton over the~same domain -- although it necessarily outputs a~bideterministic weighted automaton over its field of~fractions.
\end{example}

\subsection{Tropical Semirings}

We now establish decidability of the bideterminisability problem for~weighted automata over the~tropical (min-plus) semirings of~nonnegative integers, integers, and~rational numbers -- that is, over the~semi\-rings $\mathbb{N}_{\min} = (\mathbb{N} \cup \{\infty\},\min,+,\infty,0)$,
$\mathbb{Z}_{\min} = (\mathbb{Z} \cup \{\infty\},\min,+,\infty,0)$, and~$\mathbb{Q}_{\min} = (\mathbb{Q} \cup \{\infty\},\min,+,\infty,0)$.  

In~order to obtain these decidability results, we need the following lemma showing that a~very specific form of~determinisability is
decidable for~tropical automata.  

\begin{lemma}
\label{le:trop-deter}
Let $\mathbb{T}$ be one of the semirings $\mathbb{N}_{\min}$, $\mathbb{Z}_{\min}$, or~$\mathbb{Q}_{\min}$.
Let $\mathcal{A}$ be a weighted automaton over~$\mathbb{T}$, and~$\mathcal{B}$ a~deterministic finite automaton without weights.
Then it is decidable whether $\mathcal{A}$ is equivalent to~some deterministic weighted automaton $\mathcal{B}'$ over $\mathbb{T}$ obtained by assigning weights to $\mathcal{B}$.
\end{lemma}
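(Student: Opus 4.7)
The plan is to reduce the problem to a finite-dimensional linear-algebraic question combined with an equivalence check between a specific deterministic and the given nondeterministic tropical automaton. First I would verify the necessary support condition $\supp(\|\mathcal{A}\|) = L(\mathcal{B})$: since $\mathbb{T}$ is positive, $\supp(\|\mathcal{A}\|)$ coincides with the language of the NFA obtained from $\mathcal{A}$ by forgetting its weights, so this equality is classically decidable and failure allows immediate rejection. Assuming the supports agree, an assignment $x = (\iota, (\sigma_e)_{e \in E}, (\tau_f)_{f \in F})$ of weights to $\mathcal{B}$ yields $(\|\mathcal{B}'(x)\|, w) = a_w \cdot x$ for $w \in L(\mathcal{B})$, where the profile vector $a_w \in \mathbb{N}^{1+|E|+|F|}$ counts the edges traversed in the unique $\mathcal{B}$-run on~$w$ together with an indicator of the reached accepting state. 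Existence of a valid weighting is therefore equivalent to solvability over $\mathbb{T}$ of the (a~priori infinite) linear system
\begin{displaymath}
a_w \cdot x = (\|\mathcal{A}\|, w), \qquad w \in L(\mathcal{B}).
\end{displaymath}

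By Parikh's theorem applied to the regular language of runs of~$\mathcal{B}$, the set $\{a_w : w \in L(\mathcal{B})\}$ is effectively semilinear, and its $\mathbb{Q}$-linear span admits a computable finite basis consisting of profiles of some words $w_1,\ldots,w_d \in L(\mathcal{B})$. The finite subsystem $\{a_{w_i} \cdot x = (\|\mathcal{A}\|, w_i)\}_{i=1}^{d}$ is then decidable: by classical linear algebra over $\mathbb{Q}_{\min}$, and as a Presburger instance over $\mathbb{Z}_{\min}$ or $\mathbb{N}_{\min}$. Crucially, since $a_w$ lies in the $\mathbb{Q}$-span of the basis profiles for every $w \in L(\mathcal{B})$, any two solutions $x,x'$ of the finite subsystem give $a_w \cdot x = a_w \cdot x'$ for all such~$w$; hence all candidate solutions induce the very same deterministic tropical series $\|\mathcal{B}'(x)\|$, and it suffices to fix one arbitrary solution~$x_0$ and decide whether $\|\mathcal{B}'(x_0)\| = \|\mathcal{A}\|$.

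This last equivalence splits into two inequalities. The condition $\|\mathcal{A}\| \geq \|\mathcal{B}'(x_0)\|$ -- that every run of~$\mathcal{A}$ on any~$w$ has value at least $x_0 \cdot a_w$ -- becomes, after classically subtracting $\mathcal{B}'(x_0)$'s weights in a suitable weighted product automaton, the statement that every initial-to-final run in a finite weighted graph has nonnegative total weight, which is decidable by a Bellman-Ford analysis with negative-cycle detection. The condition $\|\mathcal{A}\| \leq \|\mathcal{B}'(x_0)\|$ amounts to the tropical min-plus behaviour of that same product being bounded above by~$0$ on~$L(\mathcal{B})$, an instance of the classical \emph{limitedness} problem for min-plus weighted automata, decidable by the results of Hashiguchi, Leung, Simon, and Kirsten, whose decision procedures additionally allow the optimal upper bound to be extracted and compared with~$0$.

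The main obstacle is this last verification step, which genuinely rests on the nontrivial limitedness theorem for tropical weighted automata; the rest of the argument is a routine combination of Parikh's theorem, linear algebra or Presburger arithmetic, and elementary shortest-path reasoning on a finite graph.
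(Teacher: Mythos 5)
Your proposal follows essentially the same route as the paper: check $\supp(\|\mathcal{A}\|)=L(\mathcal{B})$ using positivity, introduce unknown weights, encode each word's unique $\mathcal{B}$-run as a profile vector, use semilinearity/Parikh to extract a finite subsystem whose profiles form a $\mathbb{Q}$-basis, solve it over $\mathbb{Q}$, $\mathbb{Z}$ or $\mathbb{N}$, observe that all solutions induce the same series, and finally test one candidate against $\mathcal{A}$. The only divergence is the last step: the paper simply invokes the known (PSPACE-complete) decidability of equivalence between a deterministic and a nondeterministic tropical automaton, whereas you re-derive it by splitting the equivalence into two inequalities, handling $\|\mathcal{A}\|\geq\|\mathcal{B}'\|$ by negative-cycle detection in a difference product and $\|\mathcal{A}\|\leq\|\mathcal{B}'\|$ via limitedness-type results. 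That second inequality is where your sketch is thinnest: the classical limitedness theorems are stated for nonnegative (distance) weights and for boundedness by \emph{some} constant, while your product has possibly negative difference weights and needs the supremum compared with the specific threshold $0$, so the reduction is not immediate as written; still, the statement you need is exactly the cited known result, so this is a matter of citing or elaborating rather than a flaw in the overall argument. The paper's choice to cite buys brevity; your decomposition makes explicit where the genuinely nontrivial combinatorics (limitedness) enters and which half is elementary shortest-path reasoning.
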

\begin{proof}
Positivity of tropical semirings implies that the automaton $\mathcal{B}'$ can only exist when $\mathcal{B}$ recognises the language $\supp(\|\mathcal{A}\|)$.
This condition is clearly decidable by removing the weights of $\mathcal{A}$ and deciding equivalence of the resulting nondeterministic finite automaton with $\mathcal{B}$.
We may therefore assume that $\mathcal{B}$ indeed recognises $\supp(\|\mathcal{A}\|)$ in what follows.

Moreover, the task is trivial when $\mathcal{B}$ contains no initial state. We therefore assume that $\mathcal{B}$ contains precisely one initial state.

Denote the~unknown weights assigned to $\mathcal{B}$
by $x,y_1,\ldots,y_M,z_1,\ldots,z_N$ for some $M,N \in \mathbb{N}$, where $x$ corresponds to the unknown initial weight,
$y_1,\ldots,y_M$ correspond to the unknown transition weights, and $z_1,\ldots,z_N$ correspond to the unknown terminal weights.
Let $\mathbf{x} = (x,y_1,\ldots,y_M,z_1,\ldots,z_N)$.
Given $w \in \supp(\|\mathcal{A}\|)$, let $\eta_i$ denote, for $i = 1,\ldots,M$, the number of times the unique successful run of $\mathcal{B}$
upon $w$ goes through the transition corresponding to the unknown weight $y_i$. Moreover, for $j = 1,\ldots,N$, let $\nu_j = 1$ if this unique
successful run on $w$ ends in the state corresponding to the unknown weight $z_j$, and let $\nu_j = 0$ otherwise.
Finally, set $\Psi(w) = (1,\eta_1,\ldots,\eta_M,\nu_1,\ldots,\nu_N)$. 

In order for $\mathcal{B}'$ to exist, the unknown weights have to satisfy the equations
$\Psi(w) \cdot \mathbf{x}^T = (\|\mathcal{A}\|, w)$
for all $w \in \supp(\|\mathcal{A}\|)$. If this system has a solution,
then its solution set coincides with the one of a \emph{finite} system of equations
\begin{equation}
\label{eq:system}
\Psi(w_i) \cdot \mathbf{x}^T = \left(\|\mathcal{A}\|, w_i\right) \qquad \text{ for $i = 1,\ldots,K$},
\end{equation}
where $w_1,\ldots,w_K \in \supp(\|\mathcal{A}\|)$ are such that $(\Psi(w_1),\ldots,\Psi(w_K))$ is a~basis 
of~the~vector space over $\mathbb{Q}$ generated by $\Psi(w)$ for $w \in \supp(\|\mathcal{A}\|)$. 
This basis can be effectively obtained, e.g., from the representation of $\{\Psi(w) \mid w \in \supp(\|\mathcal{A}\|)\}$ as~a~semilinear set.
Hence, $w_1,\ldots,w_K$ can be found as well.

We may thus solve the system (\ref{eq:system}) over $\mathbb{N}$, $\mathbb{Z}$, or $\mathbb{Q}$ depending on the semiring considered.
While Gaussian elimination is sufficient to~solve the~system over~$\mathbb{Q}$, the~solution over $\mathbb{Z}$ and~$\mathbb{N}$ requires more sophisticated methods, namely
an~algorithm for~solving systems of~linear Diophantine equations in~the~former case~\cite{schrijver1986a}, and~integer linear programming in~the~latter case~\cite{schrijver1986a}.

If there is no solution, the automaton $\mathcal{B}'$ does not exist. Otherwise, any solution $\mathbf{x}$ gives us a deterministic tropical automaton $\mathcal{B}_{\mathbf{x}}$ obtained
from $\mathcal{B}$ by assigning the weights according to $\mathbf{x}$. By what has been said, either all such automata $\mathcal{B}_{\mathbf{x}}$ are equivalent to $\mathcal{A}$,
or none of them is. Equivalence of~a~deterministic tropical automaton with a nondeterministic one is decidable~\cite{almagor2022a}, so we may take any of the automata $\mathcal{B}_{\mathbf{x}}$  
and decide whether $\|\mathcal{B}_{\mathbf{x}}\| = \|\mathcal{A}\|$. If so, we may set $\mathcal{B}' = \mathcal{B}_{\mathbf{x}}$.
Otherwise, $\mathcal{B}'$ does not exist.
\end{proof}\goodbreak

\begin{remark}
The \emph{determinisability} problem for tropical automata -- whose decidability status is still open~\cite{lombardy2021a,lombardy2006a} -- is clearly recursively enumerable, as one can decide
equivalence of a~nondeterministic tropical automaton with a~deterministic one~\cite{almagor2022a}, as well as enumerate all deterministic tropical automata.
Lemma~\ref{le:trop-deter} implies that a~computable bound on~the~state complexity of~determinisation of~tropical automata would actually be sufficient to establish decidability of this problem:
one could simply enumerate all deterministic finite automata up to the size given by the computable bound, and~check whether weights can be assigned to any of~them such that
one obtains a~tropical automaton equivalent to~the~original one. 
\end{remark} 

We now use Lemma~\ref{le:trop-deter} to establish decidability of the \emph{bideterminisability} problem for weighted automata over tropical semirings. 

\begin{theorem}
\label{th:dec-tropical}
Bideterminisability of weighted automata over the~tropical semirings $\mathbb{N}_{\min}$, $\mathbb{Z}_{\min}$, and~$\mathbb{Q}_{\min}$ is decidable.
\end{theorem}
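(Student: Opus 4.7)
The plan is to reduce bideterminisability over~$\mathbb{T}$ to the question settled by Lemma~\ref{le:trop-deter} by using the Tamm-Ukkonen uniqueness theorem to pin down the underlying unweighted shape of any bideterministic equivalent. Indeed, since each of~$\mathbb{N}_{\min}$, $\mathbb{Z}_{\min}$, $\mathbb{Q}_{\min}$ is positive, forgetting the weights of any bideterministic weighted automaton $\mathcal{C}$ over~$\mathbb{T}$ yields a~bideterministic NFA recognising $\supp(\|\mathcal{C}\|)$; after trimming, this gives a~trim bideterministic NFA for~$\supp(\|\mathcal{C}\|)$, which by Tamm-Ukkonen is unique up to isomorphism. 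Hence any bideterministic equivalent of~$\mathcal{A}$ must share, up to the values of weights, the same unweighted skeleton.

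The procedure then proceeds in three steps. First, extract from~$\mathcal{A}$ an~NFA for~$\supp(\|\mathcal{A}\|)$ by forgetting weights, and decide -- as is possible by the classical theory of unweighted bideterministic automata recalled in the introduction -- whether the rational language $\supp(\|\mathcal{A}\|)$ is recognisable by a~bideterministic NFA at~all; if not, report that $\mathcal{A}$ admits no bideterministic equivalent. Second, if the language is bideterministically recognisable, effectively construct the unique trim bideterministic NFA $\mathcal{B}$ for~$\supp(\|\mathcal{A}\|)$. Third, apply Lemma~\ref{le:trop-deter} to~$(\mathcal{A}, \mathcal{B})$: it decides whether weights from~$\mathbb{T}$ can be assigned to the transitions, initial state, and terminal state of~$\mathcal{B}$ so that the resulting deterministic tropical automaton is equivalent to~$\mathcal{A}$. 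Since any weight assignment to a~bideterministic unweighted automaton yields a~bideterministic weighted automaton, this answers the bideterminisability question.

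Correctness of the reduction is then immediate: on~the~one hand, a~positive outcome of Lemma~\ref{le:trop-deter} provides a~weighting of~$\mathcal{B}$ that is bideterministic as~a~weighted automaton and equivalent to~$\mathcal{A}$; on~the~other hand, if~$\mathcal{A}$ has a~bideterministic equivalent~$\mathcal{C}$, then the trim part of~$\mathcal{C}$ with weights forgotten is a~trim bideterministic NFA for~$\supp(\|\mathcal{A}\|)$ -- so isomorphic to~$\mathcal{B}$ by Tamm-Ukkonen -- and the weights of~$\mathcal{C}$ realise one of the assignments Lemma~\ref{le:trop-deter} searches for. The~substantive technical obstacle is therefore already absorbed into Lemma~\ref{le:trop-deter}, where equivalence with a~weighting is reduced to a~finite system of linear equations that is then solved effectively over~$\mathbb{Q}$, over~$\mathbb{Z}$ via linear Diophantine methods, or over~$\mathbb{N}$ via integer linear programming; the present theorem merely has to invoke that lemma on~the~correct unweighted skeleton.
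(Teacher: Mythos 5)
Your proof is correct and follows essentially the same route as the paper: use positivity to fix the unweighted bideterministic skeleton of any candidate equivalent, then invoke Lemma~\ref{le:trop-deter} to decide whether that skeleton can be weighted to match $\mathcal{A}$. The only cosmetic difference is that the paper obtains the skeleton as the minimal DFA of $\supp(\|\mathcal{A}\|)$ and checks it for bideterminism, while you appeal to the Tamm--Ukkonen uniqueness of the trim bideterministic automaton -- these amount to the same computation.
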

\begin{proof}
By positivity of tropical semirings, the minimal deterministic finite automaton $\mathcal{B}$ for~$\supp(\|\mathcal{A}\|)$ is bideterministic
whenever a~tropical automaton $\mathcal{A}$ is bideterminisable.
Given $\mathcal{A}$, we may thus remove the~weights and~minimise the automaton to get $\mathcal{B}$. If $\mathcal{B}$ is not bideterministic, $\mathcal{A}$ is not bideterminisable.
If~$\mathcal{B}$ is empty, $\mathcal{A}$ is bideterminisable.
If~$\mathcal{B}$ is bideterministic and nonempty, $\mathcal{A}$ is bideterminisable if and~only if it is equivalent to some $\mathcal{B}'$ obtained from~$\mathcal{B}$ by~assigning
weights to~its transitions, its initial state, and~its terminal state -- and existence of such an~automaton~$\mathcal{B}'$ is decidable by~Lemma~\ref{le:trop-deter}.
\end{proof}

Note that the above described decision algorithm makes use of~deciding equivalence of a~nondeterministic tropical automaton with a deterministic one,
which is $\mathbf{PSPACE}$-complete \cite{almagor2022a}. Nevertheless, this does not rule out existence of~a~more efficient algorithm. We leave the~complexity of~the~bideterminisability problem for tropical automata open.

\subsection{An Undecidability Result}

We now show that the decidability result of Theorem \ref{th:dec-tropical} \emph{does not} generalise to all effective positive semirings.

To this end, let us consider a semiring arising from the semiring of formal languages $2^{\Sigma^*}$ over some alphabet $\Sigma$ by identifying all languages containing at least two different words.
In this way, we obtain a~semi\-ring $(S_{\Sigma}, \sqcup, \cdot, \emptyset, \{\eps\})$ containing all singleton languages over $\Sigma$, the empty language, and~an~element~$\top$ representing the identified languages with two
or more words:
\begin{displaymath}
S_{\Sigma} = \{\emptyset\} \cup {\Sigma^* \choose 1} \cup \{\top\}.
\end{displaymath} 
The operations of $S_{\Sigma}$ are based upon the usual operations on formal languages, \emph{i.e.},
\begin{align*} 
\emptyset \sqcup L = L \sqcup \emptyset = L & \qquad \text{for all } L \in S_{\Sigma}; \\ 
\top \sqcup L = L \sqcup \top = \top & \qquad \text{for all } L \in S_{\Sigma}; \\ 
\{w\} \sqcup \{w\} = \{w\} & \qquad \text{for all } w \in \Sigma^*; \\ 
\{u\} \sqcup \{v\} = \top & \qquad \text{for all distinct } u,v \in \Sigma^*; \\
\emptyset \cdot L = L \cdot \emptyset = \emptyset & \qquad \text{for all } L \in S_{\Sigma}; \\
\top \cdot L = L \cdot \top = \top & \qquad \text{for all } L \in {\Sigma^* \choose 1} \cup \{\top\}; \\
\{u\} \cdot \{v\} = \{uv\} & \qquad \text{for all } u,v \in \Sigma^*.
\end{align*} 
More concisely, $S_{\Sigma}$ can be described as a factor semiring $2^{\Sigma^*} / \equiv$, where $\equiv$ is a congruence on $2^{\Sigma^*}$ such that $K, L \subseteq \Sigma^*$
satisfy $K \equiv L$ if and only if either $K = L$, or $\lvert K\rvert \geq 2$ and $\lvert L\rvert \geq 2$.
It is easy to see that $S_{\Sigma}$ actually is a \emph{positive} semiring.  

We now establish undecidability of the bideterminisability problem for weighted automata over the~semi\-ring~$S_{\Sigma}$ in case $\Sigma$ contains at least two letters. To do so, we reduce
the \emph{Post correspondence problem}~(PCP) to our problem. Recall that a PCP instance over an alphabet $\Sigma$ can be described by a triple $(\Gamma,f,g)$, where~$\Gamma$ is an~alphabet and $f,g\colon \Gamma^* \to \Sigma^*$
are homomorphisms, and the task is to decide whether there exists a nonempty word $w \in \Gamma^+$ such that $f(w) = g(w)$. The~Post correspondence problem is famous for~being undecidable when $\Sigma$ contains at least
two different letters \cite[Section 5.1]{salomaa1985a}.     

\begin{theorem}
Bideterminisability of weighted automata over $S_{\Sigma}$ is undecidable for any alphabet $\Sigma$ containing at least two letters.
\end{theorem}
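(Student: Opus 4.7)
The plan is to reduce the Post correspondence problem over~$\Sigma$ to the complement of the bideterminisability problem over $S_\Sigma$. Since PCP remains undecidable under the additional assumption that $f(c) \neq g(c)$ for every $c \in \Gamma$ (any instance violating this has the trivial solution $c$ and can be detected in linear time), I start from such a \emph{normalised} instance $(\Gamma, f, g)$.

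From such an instance I would pick a fresh symbol $\# \notin \Gamma$ and build a weighted automaton $\mathcal{A}$ over $S_\Sigma$ and input alphabet $\Gamma \cup \{\#\}$ with four states $q_0, q_f, q_g, q_t$: initial weight $\iota(q_0) = \{\eps\}$, terminal weight $\tau(q_t) = \{\eps\}$, two $\#$-transitions of weight $\{\eps\}$ out of $q_0$ targeting $q_f$ and $q_g$, two $\#$-transitions of weight $\{\eps\}$ from $q_f$ and $q_g$ into $q_t$, and self-loops $q_f \xrightarrow{c, \{f(c)\}} q_f$ and $q_g \xrightarrow{c, \{g(c)\}} q_g$ for each $c \in \Gamma$. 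A direct computation yields $(\|\mathcal{A}\|, \#w\#) = \{f(w)\} \sqcup \{g(w)\}$ for every $w \in \Gamma^*$, and $(\|\mathcal{A}\|, x) = \emptyset$ for every $x$ not of the form $\#w\#$; in particular $\supp(\|\mathcal{A}\|) = \#\Gamma^*\#$.

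If PCP has no solution, then by normalisation $(\|\mathcal{A}\|, \#w\#) = \top$ for $w \in \Gamma^+$ and $(\|\mathcal{A}\|, \#\#) = \{\eps\}$, and this series is realised by the three-state bideterministic automaton with states $0, 1, 2$, initial weight $\{\eps\}$ at~$0$, terminal weight $\{\eps\}$ at~$2$, $\sigma(0,\#,1) = \sigma(1,\#,2) = \{\eps\}$, and $\sigma(1,c,1) = \top$ for every $c \in \Gamma$. Conversely, suppose PCP admits a solution $w_0 \in \Gamma^+$ and, for contradiction, that some bideterministic $\mathcal{B}$ realises $\|\mathcal{A}\|$. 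Trimming preserves the behaviour over the positive semiring $S_\Sigma$, so I may assume $\mathcal{B}$ is trim. Its underlying NFA is then a trim bideterministic NFA for $\#\Gamma^*\#$, and by the Tamm--Ukkonen uniqueness property of minimal bideterministic NFAs it must coincide, up to isomorphism, with the three-state shape above. The equation $(\|\mathcal{B}\|, \#\#) = \{\eps\}$ then forces $\iota(0), \sigma(0,\#,1), \sigma(1,\#,2), \tau(2)$ to each equal $\{\eps\}$, since a product in $S_\Sigma$ equals $\{\eps\}$ only when every factor does. Consequently $\sigma(1,c,1) = (\|\mathcal{B}\|, \#c\#) = (\|\mathcal{A}\|, \#c\#) = \top$ for every $c \in \Gamma$, giving $(\|\mathcal{B}\|, \#w_0\#) = \top$ and contradicting $(\|\mathcal{A}\|, \#w_0\#) = \{f(w_0)\}$.

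The construction is computable and produces an $\mathcal{A}$ that is bideterminisable exactly when the PCP instance has no solution, which yields the desired undecidability. The main obstacle is the converse direction: the weight analysis itself is elementary, but it relies critically on invoking Tamm--Ukkonen to pin down the structure of any trim bideterministic realisation. The padding symbol~$\#$ is likewise essential, since the unpadded language $\Gamma^+$ is not the support of any bideterministic automaton and the reduction would otherwise collapse.
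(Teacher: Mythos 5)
Your proposal is correct, but it takes a genuinely different route from the paper. The paper's reduction stays over the PCP alphabet $\Gamma$ itself, with no padding and no preprocessing: its automaton has two parallel branches so that the coefficient of $w\in\Gamma^+$ is $\{f(w)\}\sqcup\{g(w)\}$, but it additionally arranges the coefficient of $\eps$ to be $\top$, so the support is all of $\Gamma^*$. Then the no-solution case is witnessed by a one-state all-$\top$ automaton, and the solution case is refuted by an elementary argument using only Definition~\ref{def:bdwa}: the unique initial state must coincide with the unique terminal state because $\eps$ is in the support, the singleton coefficient at a solution word forces $\iota(q)$ to be a singleton, hence $(\|\mathcal{A}\|,\eps)=\top$ forces $\tau(q)=\top$, and then every coefficient lies in $\{\emptyset,\top\}$ -- contradicting the singleton coefficient at the solution. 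You instead normalise the PCP instance so that $f(c)\neq g(c)$ for all letters, pad with a fresh symbol $\#$ so the support becomes $\#\Gamma^*\#$, and in the converse direction first pass to the underlying NFA (using positivity of $S_\Sigma$ and trimming) and invoke the Tamm--Ukkonen uniqueness of trim bideterministic NFAs to pin down the three-state shape of any bideterministic realisation, after which the coefficients at $\#\#$ and $\#c\#$ force the weights and yield the contradiction at $\#w_0\#$. Both arguments are sound; yours is a systematic ``shape first, then weights'' scheme that could be reused over other positive semirings, at the cost of the normalisation step, the padding symbol, and importing the unweighted uniqueness theorem, whereas the paper's argument is shorter and self-contained, needing only the single-initial/single-terminal consequence of bideterminism and the multiplicative structure of $S_\Sigma$.
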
     
\begin{proof}
Consider an arbitrary PCP instance $(\Gamma,f,g)$ over the alphabet $\Sigma$. Let $\Gamma = \{c_1,\ldots,c_m\}$, and let us construct a weighted automaton
$\mathcal{A}_{\Gamma,f,g}$ over the semiring $S_{\Sigma}$ and alphabet $\Gamma$ depicted in~Fig.~\ref{fig:undec}.
\begin{figure}[h!]
\begin{center}
\includegraphics{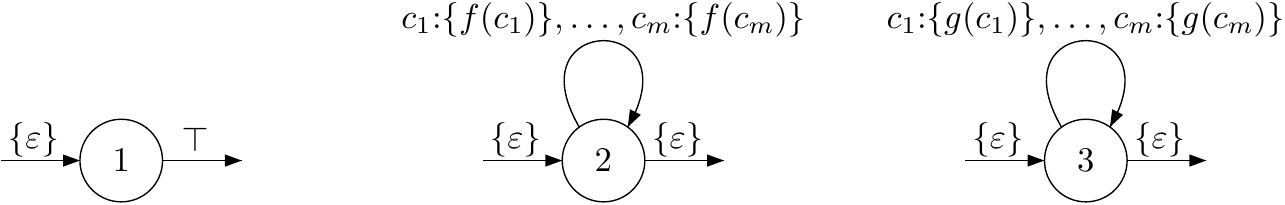}
\end{center}
\caption{\label{fig:undec}The weighted automaton $\mathcal{A}_{\Gamma,f,g}$ over $S_{\Sigma}$ and $\Gamma$ corresponding to the PCP instance $(\Gamma,f,g)$.}
\end{figure}

Given $w \in \Gamma^*$, clearly
\begin{equation}
\label{eq:pcp-behaviour}
\left(\left\|\mathcal{A}_{\Gamma,f,g}\right\|, w\right) = \left\{\begin{array}{ll} \top & \text{if $w = \eps$}, \\ \{x\} & \text{if $w \in \Gamma^+$ and $f(w) = g(w) = x$}, \\ \top & \text{if $w \in \Gamma^+$ and $f(w) \neq g(w)$}.\end{array}\right.
\end{equation} 
We prove undecidability of our problem by showing that the automaton $\mathcal{A}_{\Gamma,f,g}$ is bideterminisable if and~only if there is no solution to the PCP instance $(\Gamma,f,g)$, \emph{i.e.}, there is no
$w \in \Gamma^+$ such that $f(w) = g(w)$. 

\begin{figure}[h!]
\begin{center}
\includegraphics{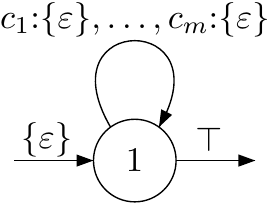}
\end{center}
\caption{\label{fig:undec2}The bideterministic weighted automaton $\mathcal{B}$ equivalent to $\mathcal{A}_{\Gamma,f,g}$ in case $(\Gamma,f,g)$ has no solution.}
\end{figure}
Indeed, if there is no solution to $(\Gamma,f,g)$, then (\ref{eq:pcp-behaviour}) implies that $(\|\mathcal{A}_{\Gamma,f,g}\|, w) = \top$ for all $w \in \Gamma^*$. The~automaton $\mathcal{A}_{\Gamma,f,g}$
is equivalent to the bideterministic weighted automaton $\mathcal{B}$ in Fig.~\ref{fig:undec2} as a result. 

Now, suppose that there is some solution $w \in \Gamma^+$ to the instance $(\Gamma,f,g)$, and assume for contradiction that $\mathcal{A}_{\Gamma,f,g}$ is equivalent to some
bideterministic weighted automaton $\mathcal{C} = (Q,\sigma,\iota,\tau)$ over $S_{\Sigma}$ and $\Gamma$. As~$\supp(\|\mathcal{A}_{\Gamma,f,g}\|) \neq \emptyset$, the automaton $\mathcal{C}$
contains precisely one state with initial weight different from~$\emptyset$ -- denote this state by $q$. 
As it is implied by (\ref{eq:pcp-behaviour}) that \smash{$(\|\mathcal{A}_{\Gamma,f,g}\|, w) \in {\Sigma^* \choose 1}$}, necessarily
\begin{displaymath}
\iota(q) \in {\Sigma^* \choose 1},
\end{displaymath}
and as $(\|\mathcal{A}_{\Gamma,f,g}\|, \eps) = \top$, this also implies that
\begin{displaymath}
\tau(q) = \top.
\end{displaymath}
However, the bideterministic weighted automaton $\mathcal{C}$ cannot have any other state with terminal weight different from $\emptyset$,
which means that $(\|\mathcal{A}_{\Gamma,f,g}\|, w) \in \{\emptyset, \top\}$, contradicting the fact that $(\|\mathcal{A}_{\Gamma,f,g}\|, w)$ is a~singleton language.  
\end{proof}

\section{Conclusions}

The concept of \emph{bideterminism} has been generalised to weighted automata over a~semiring. We have seen that
trim bideterministic weighted automata over integral domains and over positive semirings are always minimal, generalising the~well-known property
of bideterministic finite automata without weights~\cite{tamm2003a,tamm2004a}. On the contrary, we have observed that this property
does not hold over other than zero-divisor free semirings, and~thus also over nontrivial commutative rings other than integral domains.
For a~large class of commutative semirings $S$ including also many finite commutative rings, we have shown that a~bideterministic weighted automaton
over~$S$ might not even admit an~equivalent minimal automaton that is bideterministic as well, and~we have completely characterised all $m \in \mathbb{N}$ such
that this is the~case over the~ring~$\mathbb{Z}/m\mathbb{Z}$.

We also studied the \emph{bideterminisability} problem for weighted automata, in which the task is to decide whether a given weighted automaton over a~semiring~$S$
admits an~equivalent bideterministic automaton over~$S$. We have seen that this problem is decidable over fields by simply running the Cardon-Crochemore minimisation
algorithm and checking whether its output is bideterministic or not. On the other hand, we have noted that this simple decision procedure is no longer sufficient for
weighted automata over integral domains. Moreover, we have established decidability of the bideterminisability problem for tropical automata -- and observed that this result
does not generalise to automata over effective positive semirings.

We have left several questions open. For instance: Can one characterise commutative rings over which bideterministic weighted automata always admit equivalent minimal automata that are bideterministic as~well?
Is the bideterminisability problem decidable for automata over integral domains -- or at least over, e.g., completely integrally closed domains?
What can one say about the~computational complexity of~the~bideterminisability problem over tropical semirings? And can one characterise positive semirings with decidable bideterminisability problem?


\bibliographystyle{abbrv}
\bibliography{references}

\end{document}